\algrenewcommand\algorithmicrequire{\textbf{\quad Input:}}
\algrenewcommand\algorithmicensure{\textbf{\quad Output:}}
\newcommand{\qedsymb}{\hfill{\rule{2mm}{2mm}}}
\renewenvironment{proof}{\begin{trivlist} \item[\hspace{\labelsep}{\bf \noindent Proof.\/}] }{\qedsymb\end{trivlist}}%
\newcommand{\bbN}{\mathbb{N}}
\newcommand{\floor}[1]{\lfloor #1 \rfloor}
\newcommand{\ceil}[1]{\lceil #1 \rceil}
\newcommand{\opt}{\mathrm{OPT}}
\newcommand{\alg}{\mathrm{ALG}}
\newcommand{\MyFrame}[1]{\noindent \framebox[\textwidth]{ \begin{minipage}{0.97\textwidth} #1 \end{minipage}}}%
\begin{document}
\title{Colored Packets with Deadlines and Metric Space Transition Cost%
\protect\footnote{Supported in part by the Israel Science Foundation (grant No. 1404/10),
by the Google Inter-university center and by The Israeli Centers of Research Excellence
(I-CORE) program, (Center  No.4/11).}}
\author{
    Yossi Azar\inst{1} %
\and
    Adi Vardi\inst{2} %
}%
\institute{
    School of Computer Science, Tel-Aviv University, Tel-Aviv 69978, Israel.,
    Israel. \\\protect\url{azar@tau.ac.il}
\and
    School of Computer Science, Tel-Aviv University, Tel-Aviv 69978, Israel.,
    Israel.\\\protect\url{adi.vardi@gmail.com}
}%
\maketitle

\vspace{-0.1cm}
\begin{abstract}
We consider scheduling of colored packets with transition costs which
form a general metric space. Let $L \geq 1$ be the minimum laxity
(the minimum difference between the expiration and the arrival time of packets).
Let MST(G) be the weight of the minimum spanning tree (MST) of a metric space G.
Let $\delta = {\rm MST(G)} / L$.
We consider the case where the laxity is large enough, i.e., $\delta < 1$.
We design a $1 - O\big(\sqrt{\delta}\big)$ competitive algorithm. Our main
result is a hardness result of $1 - \Omega\big(\sqrt{\delta}\big)$
which matches the competitive ratio of our algorithm for each
metric space separately. In particular, we improve the hardness result
of Azar at el. for a uniform metric space.
We also extend our result for a weighted directed
graph with triangle inequality. In particular, we show an algorithm
and a nearly tight hardness result.
In proving our hardness results we use an interesting non-standard embedding.
\end{abstract}

\section{Introduction}
One of the most fundamental problem in competitive analysis is the packet scheduling
problem. In this problem we are given a sequence of incoming packets.
Each packet is of unit size and has a deadline. The goal is to find a schedule
that maximizes the number of packets that were transmitted before the deadline.
The earliest deadline first (EDF) strategy is known to achieve optimal throughput
for this problem.
This paper explores a more general problem --- maximizing colored packets throughput.
In our model a switch has $m$ incoming ports and one output port. Each incoming port
is associated with an unbounded buffer. At each time unit new packets arrive to the
queues, and each packet has a deadline. The switch maintains a current incoming port through
which pending packets can be transmitted. There is a reconfiguration overhead
when the switch changes the current incoming port (i.e., time-slots dedicated to color
transition). The goal is to maximize the
number of packets that were transmitted before deadline.
Another motivation is the operation of a paint shop in a car plant.
In this model each car has to be painted before the delivery time.
If two consecutive cars have to be painted in different colors,
a colors change is required. The cost depends on the colors,
i.e., changing from pink to red costs less than changing from black to white.
We view the reconfiguration overhead as a metric space (and sometimes even as
a weighted directed graph). In particular,
this problem generalizes the problem presented in \cite{Azar_BMC}, where
the reconfiguration overhead is uniform. We consider the benefit problem
and hence $\sup_{\sigma} \!{\rm ALG}(\sigma)/{\rm OPT}(\sigma) \leq 1$.
In this paper we characterize when it is possible to achieve a $1 - o(1)$
competitive ratio.
This is done according to the specific metric space
(or the weighted directed graph)
and the laxity (the minimum difference between the expiration and the arrival time
of packets). The ideal online competitive ratio of $1 - o(1)$ is quite rare.
This is one example that this can be achieved.

\subsection{Our results} \label{subsec:OurResults}
Let $C \geq 2$ denote the number of different packet colors.
Let $L = \min_{i \in \sigma} \{d_i - r_i\} \geq 1$ denote the minimal laxity of
the packets. Also, let MST(G) be the weight of the minimum spanning tree
(MST) of the graph G. Let $\delta = {\rm MST(G)} / L$.
We consider the case where $\delta < 1$.
Our results are as follows.
\begin{itemize}
\item For a general metric space we design an algorithm with competitive
ratio of
\\ $1 - O\big(\sqrt{\delta}\big)$.

\item We show a tight hardness result of
$1 - \Omega\big(\sqrt{\delta}\big)$ for each metric space
separately. Note that for the uniform metric space our result improves
the hardness result of \cite{Azar_BMC}. Specifically, we improve their
$1 - \Omega\big(C / L\big)$ hardness result to
$1 - \Omega\big(\sqrt{C / L}\big)$
and match their algorithmic result for the uniform metric space.
\end{itemize}
We also consider the more general case where the transition costs form
a weighted directed graph with triangle inequality.
Let TSP(G) denote the weight of the minimal Traveling Salesperson
Problem (TSP) in the graph G. We note that the computation of the accurate
size is known to be NP-complete, but since all the graphs in our model
satisfy the triangle inequality we can approximate  TSP(G).
Let $\gamma = {\rm TSP(G)} / L$. We consider the case where $\gamma < 1 / \log{C}$.
Here we show nearly tight bounds for each graph. Specifically:
 \begin{itemize}
\item For any weighted directed graph G we design an algorithm with
a competitive ratio of $1 - O\big(\sqrt{\gamma}\big)$.

\item For any weighted directed graph G we prove a hardness result
of $1 - \Omega\left(\sqrt{\frac{\gamma}{\log{C}}}\right)$.
\end{itemize}
Note that when $C = 1$ EDF is optimal and hence there is no lower bound.
This is also true when ${\rm MST(G)} = 0$ (or ${\rm TSP(G)} = 0$).
Note that for any metric space G we have ${\rm MST(G)} \leq {\rm TSP(G)} < 2{\rm MST}(G)$,
hence our algorithmic results for a metric space and
a directed graph are equivalent. Note that for a directed
graph TSP(G) and directed MST(G) are not within a constant
factor. The directed MST depends on the chosen root
of the tree, but even the ratio of TSP to the maximum
over all roots of directed MST may be as large as $\Theta(n)$~\cite{Noga13}.
Fortunately TSP in a directed graph can be approximated up
to a sub-logarithmic factor ($O(\log{n}/\log{\log{n}}))$~\cite{Kaplan05,Blaser03,Frieze82,Asadpour10}

In many cases the competitive ratio of an algorithm is computed over all metric spaces. We prove  more refined results. Specifically, we show an algorithmic result and a hardness result for each metric space (or directed graph) separately.

In order to prove the hardness result we first establish it
for a star metric,  and then do it for a general metric space.
One of the techniques used in proving hardness results
for a general metric space is to show the existence of an
interesting embedding from any metric space G on nodes V
to a metric star S whose leafs correspond to V.
The embedding uses some arbitrary fixed node $v_0$ and
satisfies the following properties:
\begin{itemize}
\item The weight of S is equal to the weight
of MST(G).

\item The weight of every Steiner tree in S that contains $v_0$ is not
larger than the weight of the Steiner tree on the same nodes in G.
\end{itemize}
Note that this embedding is different from the usual embedding since we do not
refer specifically to distances between vertices. Typically, embedding
is used to prove an algorithmic result by simplifying the metric
space. By contrast, our embedding is used to prove a hardness result.

\subsection{Related work} \label{subsec:RelatedWork}
The closest model to our model
is the colored packets with deadline problem~\cite{Azar_BMC}.
In that model we are given a sequence of incoming packets. Each packet is
characterized by a color and a deadline (all packets have the same value).
The goal is to find a schedule that maximizes the number of packets that were
transmitted before the deadline, such that there is a transition
time-slot between the transmission of packets of different colors.
An algorithm with competitive ratio of $1 - O\big(\sqrt{C / L}\big)$ and hardness
result of $1 - \Omega\big(C / L\big)$ are shown in~\cite{Azar_BMC}, when
$C$ is the number of colors and $L$ is the minimum of the difference between
the expiration and the arrival time of packets.

In the bounded delay model~\cite{KesselmanLMPSS04} we are given a sequence
of incoming packets. Each packet is characterized by a value and a deadline.
The goal is to find a schedule that maximizes the number of packets that were
transmitted before the deadline. The earliest deadline first (EDF)
strategy is known to achieve an optimal throughput when all the packets
have the same value. For arbitrary values the following results are known.
A deterministic competitive algorithm of about $1 - 1 / e \approx 0.547$~\cite{EnglertW06,LiSS07}
and a hardness result of $1 / \phi \approx 0.618$
~\cite{Hajek01,ChinF03,AndelmanMZ03}. A randomized competitive algorithm of
$1 - 1 / e \approx 0.632$~\cite{BartalCCFJLST04,ChinCFJST06} and
a hardness result of $0.8$~\cite{ChinF03}.
For additional papers related to the bounded delay model
see~\cite{LiSS05,ChrobakJST07,BienkowskiCDHJJS09}.

In the FIFO queue model~\cite{KesselmanLMPSS04} we are given
a sequence of incoming packets. The packets are placed in a
FIFO queue with bounded buffer size. The challenges for an online
algorithm are to decide whether to accept or discard arriving packet,
and to choose a queue for transmission in each time unit.
The problem was studied in~\cite{AndelmanM03,BansalFKMSS04,KesselmanMS05,EnglertW06,FiatMN08}.
Another related problem is the sorting buffer problem~\cite{Racke02}.
In that problem we are given a server with unbounded capacity and an
incoming sequence of requests. Each request corresponds to a point
in a metric space. The goal is to serve all requests while minimizing
the total distance traveled by the server. This problem can be interpreted as
a multi-port device problem. This model was studied in
~\cite{Englert05,Khandekar06,BarYehuda06,Gamzu07,Englert07}.

In the proof of  the hardness result we use a non-standard embedding.
Embeddings have been studied extensively over the years~\cite{Bourgain85,Rao99}.
Much effort was invested in embedding of metric spaces into a tree metric.
Typically, a probability distribution over a family of embeddings is used,
rather then a single embedding.
Karp~\cite{karp19892k} was the first to suggest the probabilistic metric.
Bartal~\cite{Bartal96} formally defined the notion of probabilistic embedding and
proved that any probabilistic embedding of an expander graph into
a tree has a distortion of at least $\Omega(\log{n})$.
He also proved $polylog(n)$ distortion for general metric space.
Finally, Fakcharoenphol et al.~\cite{Fakcharoenphol03} showed that any
$n$-point metric space can be embedded into a distribution over dominating tree metrics with distortion O$(\log{n})$.

\subsection{Structure of the paper}
In Section \ref{sec:TheModel} we describe the model.
In Section \ref{sec:LowerBoundStar} we prove a tight hardness
result for a star metric. In Section \ref{sec:LowerBoundGraph}
we prove a tight hardness result for a general metric space.
In addition, we show the existence of a non-standard embedding
from any metric space to a star metric.
In Section \ref{sec:LowerBoundDirectedGraph} we prove a near tight
hardness result for an arbitrary weighted directed graph. In Section \ref{sec:Algorithm}
we describe the online $(1 - o(1))$-competitive algorithm for
our problem and its analysis.

\section{The Model} \label{sec:TheModel}
We formally model the problem as follows.
There is a switch of $m$ incoming ports and one output port.
Each incoming port is associated with an unbounded buffer.
When the switch changes the current incoming port from $j$ to $k$, the reconfiguration overhead
is $w(j, k)$ time-slots for $j \neq k$. Clearly, $w(j, k) = 0$ for $j = k$. We also view ports  as colors.
We are given an online sequence of packets $\sigma$. Each packet is
characterized by a triplet $(r_{i}, d_{i}, c_{i})$, where $r_{i} \in \bbN_{+}$ and $d_{i} \in \bbN_{+}$
are the respective arrival time and deadline time of the packet, and $c_{i}$ is its color.
The goal is to find a feasible schedule that maximizes the number of transmitted packets.
A feasible schedule must satisfy the following properties:
\begin{itemize}
\item In each time-slot, either we transmit a packet, or we are in color transition phase, or
this is an idle time-slot.

\item Every scheduled packet $i$ is transmitted
between time unit $r_{i}$ and time unit $d_{i}$. Otherwise, the packet is dropped.

\item Between the transmission of a packet with color $j$ and a successive packet
with color $k$ there are a $w(j, k)$ time-slots dedicated to color transition for $j \neq k$.
\end{itemize}
We view $w(j, k)$ as the weights of edges of a directed graph.
Where we change a color from $i$ to $j$, we may first change the color from $i$ to $k$ and
then from $k$ to $j$. Consequently, each such graph satisfies the triangle inequality.
\begin{itemize}
\item Typically, $w(j, k) = w(k, j)$, and hence the graph becomes undirected and can be viewed as a metric space.

\item An interesting special case is when $w(j, k) = 1$ for $j \neq k$, i.e., the uniform metric space.
This case has been considered in \cite{Azar_BMC}.

\item A special case  of general metric space is that of a star metric (which is a generalization of the uniform metric).
In a star metric, a transition from a color $i$ to a color $j$ requires $w_{i}$ + $w_{j}$ time-slots (when $w_{i}$
denotes the weight of the edge going into the node $i$). This model is equivalent to the case where the transition time to color
$j$ is $2w_{j}$.

\end{itemize}
Let $\alg(\sigma)$ $\opt(\sigma)$) denote the throughput of the online (respectively, optimal) schedule
with respect to a sequence $\sigma$.

\section{Hardness Results for a General Metric Space  and a Weighted Directed Graph}

\subsection{Hardness Result for a Star Metric} \label{sec:LowerBoundStar}

In this section we consider the case where the transition time between colors is represented
by a star metric. This is also equivalent to the case where the transition time to color $i$ is $w_{i}$.
It generalizes the model presented in \cite{Azar_BMC}, where all transition times are the same
(in particular, equal to 1). We prove that it is not possible to achieve a competitive ratio of $1 - o(1)$
when the weight of the star metric (i.e., the sum of the weights of the edges of the star) is
asymptotically larger than the minimal laxity of the packets. Applying the result to the uniform model
improves the hardness result presented in \cite{Azar_BMC} and proves that the BG algorithm from \cite{Azar_BMC}
is asymptotically optimal. The general idea is that the adversary creates packets with large deadline at
each time unit, and also blocks of packets with close deadlines. The number of packets which arrive during a block
is significantly smaller than the number of time-slots in the block.
Any online algorithm must choose between three options:
\begin{itemize}
\item Transmit mostly packets with large deadline. In this case, the online algorithm
loses packets due to EDF violation.

\item Transmit mostly packets with close deadline. In this case, the online algorithm
loses packets due to idle time.

\item Switch frequently between colors. In this case, the online algorithm
loses packets due to color transitions.
\end{itemize}

Let $w({\rm S})$ denote the weight of the star metric  S  (i.e., the sum of the weights of
the edges of  S). We define $F = \sqrt{w({\rm S})L}$.
Let $\delta = {\rm MST(G)} / L = w({\rm S}) / L$.
\begin{theorem} \label{thm:LowerBoundStar}
No deterministic or randomized online algorithm can achieve a competitive ratio better than
$1 - \Omega\big(\sqrt{\delta}\big)$ in any given star metric $\rm S$ where $\delta < 1$.
Otherwise, if $\delta \geq 1$, the bound becomes $\alpha < 1$ for some constant $\alpha$.

\end{theorem}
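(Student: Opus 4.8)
The plan is to construct an explicit randomized adversarial sequence (by Yao's principle, it suffices to exhibit a distribution over inputs on which every deterministic algorithm performs poorly in expectation) that forces the three-way dilemma outlined in the text. The backbone of the construction is a stream of ``filler'' packets that arrive at every time unit with a large deadline, each in some fixed color, so that an algorithm that never switches colors and never idles can in principle transmit almost everything. Superimposed on this are periodic \emph{blocks}: short windows in which a burst of packets with tight deadlines (laxity $L$) arrive in a \emph{different} color, chosen by the adversary. The key quantitative choice is the block length and the inter-block spacing, which I would tune against the parameter $F = \sqrt{w(\mathrm{S})L}$. Intuitively, within each block the algorithm must decide whether to serve the tight-deadline block packets (paying a color transition of cost $\geq w_i + w_j$ to switch into and possibly out of the block color) or to ignore them (losing the block packets outright, since their deadline is close).

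\textbf{Reduction to the uniform/star core.} Since the weight of the star is $w(\mathrm{S}) = \sum_i w_i$, I would let the adversary activate block colors in a round-robin or randomized fashion so that, over a long horizon, the \emph{average} transition cost incurred per block is proportional to $w(\mathrm{S})/C$ per color, and the total transition budget the algorithm can afford per unit time is bounded. The accounting should show that each ``switch into a block and back'' costs on the order of $2w_i$ slots, while each block spans on the order of $F = \sqrt{w(\mathrm{S})L}$ time units and the blocks are spaced $\Theta(L)$ apart (so tight-deadline packets never overlap across blocks). The number of packets lost is then governed by a balance: an algorithm that always switches pays a transition fraction $\sim w(\mathrm{S})/F = \sqrt{w(\mathrm{S})/L} = \sqrt{\delta}$ of its time in transitions; an algorithm that never switches loses the block packets, a fraction that the adversary calibrates to be $\Theta(\sqrt{\delta})$ of the total throughput; and an algorithm that serves the large-deadline packets out of deadline order loses to EDF-style idling, again a $\Theta(\sqrt{\delta})$ fraction. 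The optimum $\opt(\sigma)$, which knows the block colors in advance and can pre-position itself, loses only lower-order terms.

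\textbf{The main obstacle} will be the simultaneous tightness: I must show that \emph{no} mixed strategy can beat all three pure penalties at once, i.e.\ that the adversary's parameters can be set so that the three loss terms are all $\Omega(\sqrt{\delta})$ regardless of how the online algorithm hedges between switching, idling, and violating deadline order. This is exactly where the choice $F = \sqrt{w(\mathrm{S})L}$ is forced: setting block length $\ell \approx F$ equalizes the transition-loss rate $w(\mathrm{S})/\ell$ against the block-loss rate $\ell/L$, both equalling $\sqrt{w(\mathrm{S})/L} = \sqrt{\delta}$. I would make this rigorous with a potential/charging argument over a single representative ``super-period'' and then sum over $\Theta(T/L)$ independent periods, using linearity of expectation so the randomized lower bound follows from the per-period expected loss. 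The condition $\delta < 1$ is what guarantees $\ell < L$, so that blocks genuinely fit inside the laxity window and the construction is feasible; when $\delta \geq 1$ the same blocks degenerate and one gets only a constant-factor separation $\alpha < 1$, which I would verify by substituting $\ell = L$ (the largest feasible block) into the loss bound.

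\textbf{Finally}, to convert the deterministic per-input bounds into a statement covering randomized algorithms, I would invoke Yao's minimax principle: exhibit the distribution (over the adversary's choice of block colors and, if needed, block phases) and bound the expected throughput of the best deterministic algorithm against it, which lower-bounds the competitive ratio of every randomized algorithm. The expected value of $\opt$ under this distribution should remain $(1 - o(1))$ of the total packet count, so the ratio $\mathbb{E}[\alg]/\mathbb{E}[\opt] \leq 1 - \Omega(\sqrt{\delta})$ falls out directly from the per-period analysis.
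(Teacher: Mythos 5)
There is a genuine quantitative gap in your block layout, and it breaks the trade-off you are trying to equalize. You space blocks $\Theta(L)$ apart and sum over $\Theta(T/L)$ independent periods, but then an algorithm that simply switches to serve every block pays transition cost $O(w({\rm S}))$ per period of length $\Theta(L)$, i.e.\ a lost fraction of only $w({\rm S})/L = \delta$, not $\sqrt{\delta}$: your ratio $w({\rm S})/\ell = \sqrt{\delta}$ is the transition fraction \emph{of the block's duration}, and multiplying by the blocks' $\ell/L$ share of the timeline collapses it back to $\delta$. This would reprove the old $1-\Omega(\delta)$-type bound (the analogue of the $1-\Omega(C/L)$ bound of Azar et al.), not the theorem. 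The paper avoids this by packing $N = \frac{1}{3}\sqrt{L/w({\rm S})}$ blocks of length $3F$ \emph{back-to-back} into a single window of length at most $L$, and then terminating the whole game by time $3L$ with a final dump of $\Theta(L)$ packets, so that $N \cdot \Omega(w({\rm S}))$ transition slots are lost out of a total throughput of only $\Theta(L)$ — a $\sqrt{\delta}$ fraction. A second concrete problem is your choice of one block color per block (round-robin or random): a switch then costs only $2w_i$, averaging $w({\rm S})/C$ per block, which is a factor $C$ too weak on non-degenerate stars. The paper instead releases packets of \emph{all} type B colors simultaneously in each block, in numbers proportional to the edge weights ($\frac{w_c}{w({\rm S})-w_0}F$ of color $c$), so that transmitting even half of a block's packets forces visits to colors of total star weight at least $(w({\rm S})-w_0)/2$.

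The issue you correctly flag as ``the main obstacle'' — ruling out hedging, in particular hoarding tight packets across blocks and clearing the backlog later — is not something your fixed Yao-style distribution can handle, and you give no mechanism for it. The paper's sequence is \emph{adaptive}: the adversary monitors the algorithm and, if at the end of some block at least $F/2$ type B packets are untransmitted, immediately dumps $L$ additional laxity-$L$ packets so the hoard cannot be cleared (Termination Case 1); if the algorithm instead idled (at most $2F$ transmissions in a block), it dumps $3L$ type A packets (Case 2); only if neither escape was taken across all $N$ blocks does the transition-cost accounting apply (Case 3). A distribution fixed in advance over block colors and phases cannot condition the terminal dump on the algorithm's behavior, so Yao's principle in the form you invoke does not substitute for this adaptivity. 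Randomization of the algorithm is handled in the paper not by Yao but by phrasing the adaptive conditions distributionally (each triggered when it holds with probability at least $1/4$, with a union bound giving all three good events probability at least $1/4$ per block in Case 3). Finally, note that your filler packets need finite deadlines (the paper uses $3L$) and a terminal event; otherwise backlog from transitions never converts into actual loss, and your periods are not independent since backlog carries across them.
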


\begin{proof}
Let S be a given star metric with $C$ nodes.
We can assume, without loss of generality, that
$\delta < 1$, since otherwise
one may use packets with laxity of $w({\rm S})$ (i.e., $\delta = 1$), and obtain an hardness result
of $\alpha < 1$ for some constant $\alpha$.
Let type A color denote color 0 and type B color denote colors $1,...,C-1$. Let type A packet and
type B packet refer to packets with type A color and type B color, respectively.
Let $w_i$ denote the weight of the edge incident to the vertex of color $i$.
\\ We begin by describing the sequence $\sigma({\rm S}, \alg)$.
{\\\bf Sequence structure:}
There are up to $N=\frac{1}{3}\sqrt{\frac{L}{w({\rm S})}}$ blocks,
where each block consists of $3F$ time-slots.
Let $t_i = 1 + 3(i-1)F$ denote the beginning time of block $i$.
For each block $i$, where $1 \leq i \leq N$, $F$ packets of various colors
arrive at the beginning of the block.
Specifically, $\frac{w_c}{w({\rm S})-w_0}F$ type B packets $(t_i,L+t_i,c)$, for each $1 \leq c \leq C-1$
are released.
A type A packet $(t,3L,0)$ is released at each time unit $t$ in each block.
Once the adversary stops the blocks, additional packets arrive
(we call this the final event). The exact sequence is defined as follows:
\begin{enumerate}
\item $i \leftarrow 1$
\item Add block $i$ \label{subsec:AddBlockStar}
\item If with probability at least 1/4
there are at least $F/2$ untransmitted type B packets at the end of block $i$
(denoted by Condition 1),
then $L$ packets $(t_{i+1},L+t_{i+1},1)$ are released and the sequence is terminated.
See Figure \ref{figure:Case1}.
Clearly, $t_{i+1}$ is the time of the final event. Denote this by Termination Case 1.

\item Else, if with probability at least 1/4,
at most $2F$ packets were transmitted during block $i$ (denoted by Condition 2),
then $3L$ packets $(t_{i+1},3L,0)$ are released and the sequence is terminated.
Clearly, $t_{i+1}$ is the time of the final event.
See Figure \ref{figure:Case2}.
Denote this by Termination Case 2.

\item Else, if $i = N$ (there are $N$ blocks, none of them satisfied Condition 1 or 2)
$2L$ packets $(L+1,3L,0)$ are released, and the sequence is terminated.
Clearly, $L+1$ is the time of the final event.
See Figure \ref{figure:Case3}.
Denote this by Termination Case 3.

\item Else ($i < N$) then

\begin{enumerate}
\item $i \leftarrow i + 1$
\item Goto \ref{subsec:AddBlockStar}
\end{enumerate}

\end{enumerate}
We make the following
{\\\bf Observations:}
\begin{enumerate}
\item Each block consists of $3F$ time-slots. Hence, if $\alg$ transmitted at most $2F$ packets during
a block, there must have been at least $F$ idle time-slots.

\item There are up to $\frac{1}{3}\sqrt{\frac{L}{w({\rm S})}}$ blocks and each block consists of 3$\sqrt{w({\rm S})L}$
time-slots. Hence, the time of the final event is at most $L+1$.

\item Exactly one type A packet arrives at each time-slot until the final event. Hence, at most
$L$ type A packets arrive before (not including) the final event.

\item During each block, exactly $F$ type B packets arrive, which sum up to
at most $L/3$ type B packets before (not including) the final event.

\end{enumerate}
Now we can analyze the competitive ratio of $\sigma({\rm S}, \alg)$.
Consider the following possible sequences (according to the termination type):
\begin{enumerate}
\item Termination Case 1: Let $Y$ denote the number of packets in the sequence. According to the observations, the sequence
consists of at most $L$ type A packets, and at most $\frac{4}{3}L$ type B packets ($L/3$ until the
final event and $L$ at the final event). Hence, $Y \leq L + \frac{4}{3}L \leq 3L$.

\begin{itemize}
\item {\bf We bound the performance of ALG:}
At time $t_{i+1}$ there is a probability of at least 1/4 that $\alg$ has $L + F/2$ untransmitted type
B packets. Since type B packets  have laxity of $L$, $\alg$ can transmits at most $L+1$ of them, and must
drop at least $F/2 - 1$. The expected number of transmitted packets is
$$
E(\alg(\sigma)) \leq Y - \frac{1}{4}(F/2 - 1) = Y - F/8 + 1/4.
$$
\item {\bf We bound the performance of OPT$^\prime$:}
$\opt ^\prime$ transmits the packets in three stages:
\begin{itemize}
\item {\bf Type B packets that arrive before the final event:}
Recall that all type B packets in a block arrive at once in the beginning of the block.
Let $j_{k}$, $1 \leq k \leq C$, be the order of colors in the minimal TSP. In each
block $\opt ^\prime$ transmits the type B packets according to that order --- first all
packets with color $j_{1}$, then all the packets with color $j_{2}$, and so on.
It is clear that $\opt ^\prime$ needs at most $F+2w({\rm S})$ time-slots to transmits the packets
($F$ for packet transmission and $2w({\rm S})$ for color transition).
$\opt ^\prime$ transmits the packets from the beginning of the block.
Recall that $L > w({\rm S})$ and $F = \sqrt{w({\rm S})L}$. Therefore $2F > 2w({\rm S})$.
Since the block's size is $3F$, there are enough time-slots.
\item {\bf Type B packets that arrive during the final event:}
The $L$ packets $(t_{i+1},$ $L+t_{i+1},1)$ arrived during the final
release time are transmitted by $\opt ^\prime$ consecutively from time $t_{i+1}$.
$\opt ^\prime$ can transmit $L$ packets, except for one transition phase, and
hence may lose at most $2w({\rm S})$ packets. According to the observations,
the time of the final event $t_{i+1}$ is at most $L+1$.
We conclude that $\opt ^\prime$ transmits all type B packets until time unit $2L$.

\item {\bf Type A packets:}
$\opt ^\prime$ transmits the $L$ type A packets consecutively from time unit
$2L+1$. Since the deadlines are $3L$, $\opt ^\prime$ transmits all type A packets.
\end{itemize}
We conclude that $\opt(\sigma) \geq \opt ^\prime (\sigma) \geq Y - 2w({\rm S})$,
\end{itemize}
The competitive ratio is
\begin{eqnarray*}
\frac{E(\alg(\sigma))}{\opt(\sigma)} & \leq & \frac{Y - F/8 + 1/4}{Y - 2w({\rm S})} \leq \frac{3L - \frac{1}{8}\left(\sqrt{w({\rm S})L}\right) + 1/4}{3L - 2w({\rm S})} \\
& = & 1 - \Omega\left(\sqrt{\frac{w({\rm S})}{L}}\right) \ .
\end{eqnarray*}

Here the second inequality results from the fact that the number is below 1 and the
numerator and the denominator increase by the same value.

\item Termination Case 2: The sequence consists of more than $3L$ type A packets, all deadlines are
at most $3L$.

\begin{itemize}
\item {\bf We bound the performance of ALG:}
The probability that $\alg$ was idle during
$F$ time-slots is at least 1/4. Hence, the expected number of transmitted packets is
$$
E(\alg(\sigma)) \leq 3L - F/4.
$$

\item {\bf We bound the performance of $OPT ^\prime$:}
At each time unit until the final event, $\opt ^\prime$ transmits the type A packet
that arrived at that particular time unit. Consequently, from the final event
and until time unit $3L$, $\opt ^\prime$ transmits the type A packets that arrived at
the final event. Therefore, $\opt ^\prime$ transmits $3L$ type A packets, and so
$\opt(\sigma) \geq \opt ^\prime (\sigma) \geq 3L$.
\end{itemize}

The competitive ratio is
$$
\frac{E(\alg(\sigma))}{\opt(\sigma)} \leq \frac{3L - F/4}{3L} =
\frac{3L - \frac{1}{4}\left(\sqrt{w({\rm S})L}\right)}{3L} = 1 - \Omega\left(\sqrt{\frac{w({\rm S})}{L}}\right).
$$

\item Termination Case 3: the sequence consists of $3L$ type A packets, all deadlines are
at most $3L$.

\begin{itemize}
\item {\bf We bound the performance of ALG:}
Let $U_{i}$ be the event that the number of untransmitted type B
packets at the end of block $i$ is less than $F/2$. If $U_{i}$ occurs, then let
$j_{k}$, $1 \leq k \leq r$, be the type B colors transmitted by $\alg$ in block $i$.
At least $F/2$ packets that arrived in this block have to be transmitted
(recall that $F$ type B packets arrive at the beginning of each block).
Therefore,
$$
\frac{w_{j_{1}}}{w({\rm S})-w_0}F + \frac{w_{j_{2}}}{w({\rm S})-w_0}F + \cdots + \frac{w_{j_{r}}}{w({\rm S})-w_{0}}F \geq F/2,
$$
and so
$$
w_{j_{1}} + w_{j_{2}} + \cdots + w_{j_{r}} \geq \frac{w({\rm S})-w_{0}}{2}.
$$
Let $E_{i}$ be the event that more than $2F$ packets are transmitted during block $i$.
If event $U_{i-1}$ and $E_{i}$ occur, then there are at most $3F/2$ untransmitted
type B packets in the beginning of block $i$ ($F$ arrived in the beginning of the block
and at most $F/2$ from the previous block) but more than $2F$ packets were transmitted.
Therefore, at least one type A packet was transmitted during the block.
Combining the results, if $U_{i}$, $U_{i-1}$ and $E_{i}$ occur then:
\begin{itemize}
\item During block $i$ at least $(w({\rm S})-w_{0})/2$ time-slots were used for type B color
transition.

\item Type A packet was transmitted during the block.
\end{itemize}
A block $i$ is called {\it good\/} if the events $U_{i}$, $U_{i-1}$ and $E_{i}$ occur.
For any two (consecutive) good blocks the transition cost is at least
$(w(S)-w_{0})/2  + w_{0} \geq w(S)/2$.
Since none of the blocks satisfy Condition 1 or 2, it follows that
for all $i$ such that  $\frac{1}{3}\sqrt{\frac{L}{w({\rm S})}} \geq i \geq 1$ we have:
${\rm Pr}[U_{i}] \geq 3/4, {\rm Pr}[U_{i-1}] \geq 3/4,$ and ${\rm Pr}[E_{i}] \geq 3/4$.
Therefore:
\begin{align*}
& {\rm Pr}[U_{i} \cap U_{i-1} \cap E_{i}]   =   1 - {\rm Pr}[\neg (U_{i} \cap U_{i-1} \cap E_{i})]  \\
& \qquad =   1 - {\rm Pr}[\neg U_{i} \cup \neg U_{i-1} \cup \neg E_{i}] \geq 1 - 1/4 - 1/4 - 1/4 = 1/4.
\end{align*}
The sequence consists of $\frac{1}{3}\sqrt{\frac{L}{w({\rm S})}}$ blocks.
Therefore, the expected number of good blocks is  $\frac{1}{4} \cdot \frac{1}{3}\sqrt{\frac{L}{w({\rm S})}} = \frac{1}{12}\sqrt{\frac{L}{w({\rm S})}}$
and hence the expected number of disjoint pairs of blocks is
$\frac{1}{24}\sqrt{\frac{L}{w({\rm S})}}$.
Consequently, the expected number of lost packets is at least
$\frac{1}{24}\sqrt{\frac{L}{w({\rm S})}} \frac{w({\rm S})}{2}$.
We conclude that the expected number of transmitted packets is
$$
E(\alg(\sigma)) \leq 3L - \frac{w({\rm S})}{48}\sqrt{\frac{L}{w(S)}}
= 3L - \frac{1}{48}\left(\sqrt{w({\rm S})L}\right).
$$

\item {\bf We bound the performance of $OPT^\prime$:}
At each time unit until the final event, $\opt ^\prime$ transmits the type A packet
that arrived at the same time unit. Consequently, from the final event
and until time unit $3L$, $\opt ^\prime$ transmits the type A packets that arrived at
the final event. Therefore, $\opt ^\prime$ transmits 3L type A packets, and so
 $\opt \geq \opt ^\prime \geq 3L$.

\end{itemize}
The competitive ratio is
$$
\frac{E(\alg(\sigma))}{\opt(\sigma)} \leq \frac{3L - \frac{1}{48}\left(\sqrt{w({\rm S})L}\right)}{3L}
= 1 - \Omega\left(\sqrt{\frac{w({\rm S})}{L}}\right).
$$

This completes the proof of the Theorem \ref{thm:LowerBoundStar}.

\end{enumerate}

\end{proof}

\begin{corollary} \label{thm:LowerBoundUniform}
No deterministic or randomized online algorithm can achieve a competitive ratio better than
$1 - \Omega\big(\sqrt{C / L}\big)$ when all color transitions takes one unit of time
and $L > C$. Otherwise, if $L \leq C$, the bound becomes $\alpha < 1$ for some constant $\alpha$.
\end{corollary}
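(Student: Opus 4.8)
The plan is to obtain the corollary as an immediate specialization of Theorem \ref{thm:LowerBoundStar}, the only real content being the observation that the uniform metric is a particular star metric. First I would note that the uniform metric on $C$ colors, in which $w(j,k)=1$ for every $j\neq k$, coincides with the star metric whose every edge has weight $w_i=1/2$: in the star model a transition from color $i$ to color $j$ costs $w_i+w_j=1/2+1/2=1$, which is precisely the uniform transition cost, and the center is merely a Steiner routing point carrying no packets. Hence the entire adversarial construction of Theorem \ref{thm:LowerBoundStar}, together with its analysis of the three termination cases, transfers directly to the uniform setting with no modification.

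With this identification the parameters specialize cleanly. The star weight is $w({\rm S})=\sum_{i=0}^{C-1}w_i=C/2$, so $\delta=w({\rm S})/L=C/(2L)$, and the per-color type B count $\frac{w_c}{w({\rm S})-w_0}F$ reduces to $\frac{1/2}{(C-1)/2}F=\frac{1}{C-1}F$, distributing the $F$ type B packets evenly among the $C-1$ type B colors, which is the natural uniform instance. Substituting $\delta=C/(2L)$ into the theorem's conclusion yields the hardness bound $1-\Omega\big(\sqrt{\delta}\big)=1-\Omega\big(\sqrt{C/(2L)}\big)=1-\Omega\big(\sqrt{C/L}\big)$, as required.

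It remains only to translate the regime conditions. The hypothesis $\delta<1$ of the theorem is equivalent to $L>C/2$, so the stated assumption $L>C$ is amply sufficient for the $1-\Omega\big(\sqrt{C/L}\big)$ bound to hold. In the complementary regime $L\leq C$ I would invoke the theorem's ``otherwise'' clause, instantiating the construction with packets of laxity $w({\rm S})$ so that $\delta=1$ and the competitive ratio is forced below a constant $\alpha<1$. I anticipate no genuine obstacle here, since the corollary is a direct instance of the theorem; the only point to watch is that the $\Theta(1)$ discrepancy between $w({\rm S})=C/2$ and $C$ is harmless, because $w({\rm S})=\Theta(C)$ keeps every $\sqrt{w({\rm S})/L}$ term at $\Theta\big(\sqrt{C/L}\big)$ and leaves the asymptotics unchanged.
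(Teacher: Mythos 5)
Your proposal is correct and follows the paper's own proof essentially verbatim: the paper likewise realizes the uniform metric as the star with every edge of weight $1/2$, notes $w({\rm S})=C/2$ so that each transition costs one time unit, and applies Theorem~\ref{thm:LowerBoundStar} to get $1-\Omega\big(\sqrt{C/L}\big)$. Your additional bookkeeping (the even split $\frac{1}{C-1}F$ of type B packets, the observation that $L>C$ comfortably implies $\delta<1$, and the constant-factor harmlessness of $w({\rm S})=\Theta(C)$) is all consistent with, and slightly more explicit than, the paper's two-line argument.
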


\begin{proof}
Let  S  be a star metric with $C$ nodes such that the weight of each edge is equal to 1/2.
Clearly, each color transition requires one time unit and $w({\rm S}) = C/2$.
Applying Theorem \ref{thm:LowerBoundStar}, we obtain the hardness result of
$1 - \Omega\big(\sqrt{C / L}\big)$.
\end{proof}
Remark: It is clear that when all color transition takes $D$ units of time the hardness result
$1 - \Omega\big(\sqrt{DC / L}\big)$.

\subsection{Hardness Result for a General Metric Space} \label{sec:LowerBoundGraph}

In this section we consider the case where the transition time is
represented by a metric space G.
A natural approach is to reduce G to a less complex graph (e.g., a star metric) and use
a  sequence similar to the sequence $\sigma({\rm S}, \alg)$ described in Section \ref{sec:LowerBoundStar}.
Note that a hardness result on a sub-graph is not a hardness result on the ambient graph.
For example, a hardness result for MST of a metric space G is not
an hardness result for G since the algorithm may use the additional
edges to reduce the transition time. Recall that $\delta = {\rm MST(G) }/ L$.
We begin by describing the requirements for embedding  G into a star metric S, such that
using $\sigma({\rm S}, \alg)$ for $\alg$ on a metric space G yields
an hardness result of $1 - \Omega\big(\sqrt{\delta}\big)$.
Then we prove that such embedding exists for any metric space G. By combining the results, we conclude that the hardness result
of $1 - \Omega\big(\sqrt{\delta}\big)$ holds for any metric space G.
\\ We begin by introducing some new definitions:

\begin{itemize}
\item We define $w({\rm T}) = \sum\limits_{e \in V} w(e)$ for a tree ${\rm T} = (V,E)$, and
let $P_{{\rm T}}(v)$ denote the parent of node $v$ in a rooted tree ${\rm T}$.

\item We define $w_{\rm S}(V) = \sum\limits_{v \in V} w(c, v)$ ($= \sum\limits_{v_i \in V} w_i$)
for a star metric ${\rm S}$ with a center $c$. It is clear that for a star ${\rm S}$ with leaves $V$, $w_{\rm S}(V) = w({\rm S})$.

\item Let ${\rm T}_{\rm G}(V)$ be the minimum weight connected component that contains the set $V$
(i.e., the minimum Steiner tree on these points) in the metric space  G.

\item Let $E({\rm G})$ be the set of edges of graph  G.
\end{itemize}
First we prove that the embedding exists for any metric space G.
\begin{theorem} \label{thm:GraphToStar}
For any given metric space ${\rm G}$ on nodes $V$ and for any vertex $v_0 \in V$
there exists a star metric ${\rm S}$ with leaves $V$ and an embedding $f: {\rm G} \rightarrow {\rm S}$
from  ${\rm G}$  to ${\rm S}$
$(f$ depends on $v_0)$ such that:
\begin{enumerate}
\item $w({\rm S}) = {\rm T}_{\rm G}(V)$  {\rm (= MST(G))}.

\item For every $V' \subseteq V$ such that $v_0 \in V'$,
$w({\rm T}_{\rm G}(V')) \geq w_{\rm S}(V').$
\end{enumerate}
\end{theorem}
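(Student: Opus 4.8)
The plan is to construct the embedding explicitly from the minimum spanning tree of $\mathrm{G}$, rooted at the designated vertex $v_0$, and to define the star weights by a telescoping/charging scheme so that the two required properties fall out of the structure of the MST. First I would take $\mathrm{T} = \mathrm{MST(G)}$ rooted at $v_0$, and for each non-root vertex $v$ set the star weight $w_v$ equal to the weight of the tree edge connecting $v$ to its parent $P_{\mathrm{T}}(v)$, i.e. $w_v = w(v, P_{\mathrm{T}}(v))$; I would set $w_{v_0} = 0$. With this choice property (1) is immediate, since $w(\mathrm{S}) = \sum_{v \in V} w_v = \sum_{v \neq v_0} w(v, P_{\mathrm{T}}(v)) = w(\mathrm{T}) = \mathrm{MST(G)}$, using that a rooted spanning tree has exactly one parent-edge per non-root vertex.

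The substance is property (2), the inequality $w(\mathrm{T}_{\mathrm{G}}(V')) \geq w_{\mathrm{S}}(V')$ for every $V'$ containing $v_0$. Here $w_{\mathrm{S}}(V') = \sum_{v \in V'} w_v = \sum_{v \in V',\, v \neq v_0} w(v, P_{\mathrm{T}}(v))$, the total MST-length of the parent-edges of the chosen vertices. The approach I would take is to exhibit, for each $v \in V' \setminus \{v_0\}$, a distinct portion of the Steiner tree $\mathrm{T}_{\mathrm{G}}(V')$ of weight at least $w_v$, so that summing gives the bound. Since the star weight $w_v$ is an MST edge length and the MST is the minimum-weight connected structure, the key comparison is between a single MST edge and a path in the (possibly cheaper-looking but globally constrained) Steiner tree. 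Concretely, I would consider the unique path in $\mathrm{T}_{\mathrm{G}}(V')$ between $v$ and $v_0$ (both lie in $V'$, hence in the Steiner tree, which is connected), and charge $w_v$ against the first edge of this path or against a suitably chosen edge, arguing via the cycle/cut property of MSTs that no Steiner-tree edge can be cheaper than the MST edge it is charged to without contradicting minimality of $\mathrm{T}$.

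The hard part will be ensuring the charging is \emph{injective} — that the pieces of $\mathrm{T}_{\mathrm{G}}(V')$ assigned to different vertices $v$ do not overlap and are each genuinely of weight $\geq w_v$. A clean way to enforce disjointness is to root $\mathrm{T}_{\mathrm{G}}(V')$ at $v_0$ as well and charge each $v \in V' \setminus \{v_0\}$ to the edge leaving $v$ toward $v_0$ in the Steiner tree; distinct vertices then give distinct edges. The weight comparison then reduces to showing that for each such $v$, the corresponding Steiner-tree edge (or the sub-path it anchors) has weight at least $w(v, P_{\mathrm{T}}(v))$, which I would establish using the MST exchange argument: replacing the charged Steiner-tree edge by the MST parent-edge of $v$ keeps $\mathrm{T}$ connected, so minimality forces $w(v,P_{\mathrm{T}}(v))$ to be no larger. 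I anticipate some care is needed when $\mathrm{T}_{\mathrm{G}}(V')$ uses Steiner points outside $V$ and when the natural charge target is a multi-edge path rather than a single edge; in that case I would charge $v$ to the maximal-weight edge on its Steiner path and control the overlaps by processing vertices in a fixed order (e.g. by decreasing depth), so that each Steiner edge is used at most once. Establishing this disjoint charging, rather than the weight inequality per piece, is where the real work lies.
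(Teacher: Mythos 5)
Your construction of the star and your proof of Property 1 are exactly the paper's: root the MST at $v_0$ and set $w_v = w(v, P_{\rm T}(v))$, $w_{v_0}=0$. But for Property 2, which is the entire content of the theorem, your proposal contains a genuine gap, and your first charging scheme is in fact false. Charging each $v \in V'\setminus\{v_0\}$ to its parent edge in the Steiner tree ${\rm T}_{\rm G}(V')$ rooted at $v_0$ fails: take $V = \{v_0, s, x, y\}$ with $d(s,x)=d(s,y)=1$, $d(x,y)=2$, and $d(v_0,u)=10$ for $u \in \{s,x,y\}$. If Prim's tie-breaking adds $x$ first, then $w_x = 10$, $w_s = w_y = 1$. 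For $V' = \{v_0,x,y\}$ a minimum Steiner tree is $\{(v_0,s),(s,x),(s,y)\}$ of weight $12$, and rooted at $v_0$ the parent edge of $x$ is $(s,x)$ of weight $1 \ll w_x = 10$. The inequality $12 \geq w_x + w_y = 11$ holds only because the heavy edge $(v_0,s)$ --- incident to \emph{no} vertex of $V'$ --- absorbs the charge for $x$, so any correct charging must be allowed to send a vertex to an edge far away on its path. Your fallback (charge $v$ to the maximum-weight edge on its $v_0$--$v$ Steiner path) does get the per-vertex bound right: since when Prim adds $v$ its parent edge is the minimum-weight edge crossing the cut $(V_{\rm new}, V \setminus V_{\rm new})$, and any $v_0$--$v$ path crosses that cut, the bottleneck of the path is at least $w_v$. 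But the injectivity of the assignment --- which you yourself flag as ``where the real work lies'' --- is precisely the theorem's difficulty, and ``process vertices by decreasing depth so each edge is used at most once'' is an unsubstantiated heuristic, not an argument: once earlier vertices have consumed edges, nothing in your sketch guarantees that a later vertex still finds an uncharged edge of weight $\geq w_v$ on its path. What is needed is a Hall-type condition (or a sorted-domination argument over weight thresholds), and you prove neither.

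The paper fills exactly this hole with an explicit exchange process (Embed-Prim): start from ${\rm T}' = {\rm T}_{\rm G}(V')$ and replay Prim from $v_0$ on all of $V$; whenever Prim's edge $e_i$ for a vertex $u_i \in V'$ is not already present, adding it closes a cycle, and one deletes from that cycle a maximal-weight \emph{original} Steiner edge $e'$ meeting the already-built Prim set (Lemma \ref{lemma:EmbedPrimExistEdge} shows such an edge exists), with $w(e_i) \leq w(e')$ by Prim's greedy choice (Lemma \ref{lemma:EmbedPrimInvariant}). Each parent edge of a $V'$-vertex thereby evicts a \emph{distinct} original Steiner edge of no smaller weight, which is exactly the injective dominating assignment your sketch postulates; summing the invariant over the run yields $w_{\rm S}(V') \leq w({\rm T}_{\rm G}(V'))$. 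So your plan identifies the right target and the right per-edge comparison, but stops short at the one step that requires an idea; to complete it you would need either this exchange argument or an explicit Hall/threshold-counting proof that for every $t$ the number of Steiner-tree edges of weight greater than $t$ is at least the number of $v \in V'\setminus\{v_0\}$ with $w_v > t$.
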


\begin{proof}
See Appendix \ref{appsec:GraphToStarProof}.
\end{proof}
Now we use the embedding to prove a hardness result of $1 - \Omega\big(\sqrt{\delta}\big)$.
\begin{theorem} \label{thm:LowerBoundMetricSpace}
No deterministic or randomized online algorithm can achieve a competitive ratio better than
$1 - \Omega\big(\sqrt{\delta}\big)$ in any given metric space ${\rm G}$, where $\delta < 1$.
Otherwise, if $\delta \geq 1$, the bound becomes $\alpha < 1$ for some constant $\alpha$.
\end{theorem}

\begin{proof}
See Appendix \ref{appsec:MetricSpaceProof}.
\end{proof}
\subsection{Hardness Result for Directed Graphs} \label{sec:LowerBoundDirectedGraph}

In this section we consider the case where the transition time is
represented by a directed graphs with triangle inequality.
Let $\gamma = {\rm TSP(G)} / L$.
We prove near tight hardness result of
$1 - \Omega\left(\sqrt{\frac{\gamma}{\log{C}}}\right)$.
As in Section \ref{sec:LowerBoundStar}, the sequence consist of blocks.
Each block consist of phases. Each phase ``forces'' the online algorithm
to transmit  packets from at least 1/2 of the remaining type B colors
(colors that were not used during the previous blocks).
After $\log{C}$ phases, the online algorithm has to spend at
least TSP(G)  time-slots for color transition. This technique
guarantee that the online algorithm loses enough packets, which implies a hardness result of
$1 - \Omega\left(\sqrt{\frac{\gamma}{\log{C}}}\right)$.

\begin{theorem} \label{thm:LowerBoundGeneralGraph}
No deterministic online algorithm can achieve a competitive ratio better than
$1 - \Omega\left(\sqrt{\frac{\gamma}{\log{C}}}\right)$ in any given directed graph ${\rm G}$ with
triangle inequality, where $\gamma < 1 / \log{C}$.
Otherwise, if $\gamma \geq 1 / \log{C}$, the bound becomes
$1 - \Omega(1 / \log{C})$.
\end{theorem}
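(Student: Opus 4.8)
The plan is to lift the adaptive block construction of Theorem~\ref{thm:LowerBoundStar} to a directed graph ${\rm G}$ by splitting each block into $\log C$ \emph{phases}, whose combined effect forces the online algorithm either to waste packets or to traverse \emph{all} $C-1$ type B colors inside the block, thereby paying $\Omega({\rm TSP(G)})$ in transitions. Keep color $0$ as the type A color carrying a long-deadline packet $(t,3L,0)$ at every time unit, so that an adversary who floods type A at the end always leaves $\opt=\Theta(L)$. Set $F=\sqrt{{\rm TSP(G)}\,L/\log C}$, let each phase span $\Theta(F)$ slots, each block span its $\log C$ phases ($\Theta(F\log C)$ slots), and take $N=\Theta\!\big(\sqrt{L/({\rm TSP(G)}\log C)}\big)$ blocks, so the whole sequence fits in the laxity window, $N\cdot\Theta(F\log C)=\Theta(L)$. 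The hypothesis $\gamma<1/\log C$ is precisely what makes ${\rm TSP(G)}<F$ and a single block shorter than $L$, so that the needed transitions fit inside a phase.

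First I would implement the phase/halving mechanism against the (deterministic) algorithm. Inside a block maintain an active color set $R$, initially all $C-1$ type B colors. In phase $j$ release, equally over the colors of $R$, a batch of $\Theta(F)$ type B packets with a tight deadline at the end of the phase; then an algorithm that transmits from fewer than half the colors of $R$ must drop $\Omega(F)$ of these packets. Since the algorithm is deterministic the adversary sees exactly which colors were served and sets $R$ for phase $j+1$ to be the unserved colors, so $|R|$ at most halves each phase. Hence after $\log C$ phases either the algorithm was caught dropping $\Omega(F)$ packets, or it transmitted packets of all $C-1$ type B colors within the block; this halving is the origin of the $\log C$ factor in the bound.

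The key quantitative step is a transition lemma: if inside one block the algorithm transmits packets of all $C-1$ type B colors while repeatedly returning to color $0$ at the phase boundaries, then its transition walk is a closed walk through all $C$ nodes, so by the triangle inequality it can be short-cut to a tour and therefore costs at least ${\rm TSP(G)}$ slots. I would then run the three-way analysis of Theorem~\ref{thm:LowerBoundStar}: (i) if at some phase the algorithm drops $\Omega(F)$ tight type B packets, terminate and dump $L$ same-color packets, so $\alg$ loses $\Omega(F)$ against an $\opt'$ that serves that color in one run; (ii) if in some block the algorithm idles for $\Omega(F)$ slots, terminate and flood type A packets; (iii) otherwise every block is \emph{good}, and by the transition lemma the algorithm wastes $\geq{\rm TSP(G)}$ transmission slots per block, i.e.\ $\Omega(N\cdot{\rm TSP(G)})=\Omega(F)$ slots over the $N$ blocks. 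In all cases $\opt\ge\opt'\ge\Theta(L)$ (serve type A throughout, or the dumped run plus type A), so $\alg/\opt\le 1-\Omega(F/L)=1-\Omega\!\big(\sqrt{\gamma/\log C}\big)$. For $\gamma\ge1/\log C$ the usual truncation of the deadlines to make the effective ratio $1/\log C$ yields the $1-\Omega(1/\log C)$ bound.

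The hardest part will be the directed transition lemma together with the phase calibration. For directed ${\rm G}$ a Hamiltonian \emph{path} through the colors can be far cheaper than ${\rm TSP(G)}$, so merely visiting every color is not enough; the argument must force the walk to close up, which I propose to achieve by planting tight type A packets at the phase boundaries so that the algorithm is pinned to color $0$ at each boundary, turning the per-block walk into a concatenation of closed sub-walks through color $0$ that jointly cover all colors and hence cost $\geq{\rm TSP(G)}$ after short-cutting. Simultaneously the per-color batch sizes and the phase length must be tuned so that ``serve at least half of $R$'' is genuinely forced by the tight deadlines, that the dropping alternative already costs $\Omega(F)$, and that a good block still leaves enough time to transmit while paying its $\Omega({\rm TSP(G)})$ transitions (which is exactly where $\gamma<1/\log C$ is used). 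Because the release in phase $j+1$ reacts to the algorithm's realized choices in phase $j$, the construction only yields a \emph{deterministic} lower bound, matching the scope of the statement.
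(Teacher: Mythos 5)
Your overall architecture is the paper's: the same parameters ($F$ coincides with the paper's $H=\sqrt{{\rm TSP(G)}L/\log C}$, $N=\Theta\bigl(\sqrt{L/({\rm TSP(G)}\log C)}\bigr)$), the same halving mechanism over the set of unserved type B colors that yields the $\log C$ factor, the same three termination cases inherited from Theorem \ref{thm:LowerBoundStar}, and the same key transition lemma (a closed walk through color $0$ covering all colors costs at least ${\rm TSP(G)}$ after short-cutting, using the triangle inequality). However, there is a genuine gap exactly at the point you flag as hardest: your device for closing the walk does not work as stated. An adversary cannot ``pin'' an online algorithm to color $0$ by planting tight type A packets at phase boundaries; it can only penalize refusal, and the algorithm may simply drop those boundary packets. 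With one tight packet per boundary the total penalty is $O(N\log C)$ packets, which is smaller than the required loss $\Omega(F)$ whenever ${\rm TSP(G)}>\log C$, so the algorithm can afford to never return to color $0$ and pay only a Hamiltonian-path-like cost, which in a directed graph can be far below ${\rm TSP(G)}$. To make refusal genuinely unprofitable you would need bursts of order ${\rm TSP(G)}$ tight type A packets per boundary, but then these bursts must also be schedulable by ${\rm OPT}'$ within their tight windows in all three termination cases --- in particular in Case 1, where ${\rm OPT}'$ is mid-run through the per-phase TSP orders of type B colors --- and none of this calibration is carried out in your proposal.

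The paper avoids this difficulty entirely, and more cheaply: it needs the walk to pass through color $0$ only twice per block, not at every phase boundary. Each block is framed by a start phase and an end phase of $2H\log C$ slots during which \emph{no} type B packets arrive; since a surviving block leaves the algorithm with at most $H/2$ pending type B packets, an algorithm that transmits no type A packet during such a phase must idle for more than $H\log C$ slots, and the adversary then terminates with a flood of $3L$ type A packets (Condition $(I_2)$, Termination Case 2), charging the idle slots directly. Blocks surviving both conditions therefore contain a type A transmission before and after the regular phases, so the per-block walk is color $0\to$ all type B colors $\to$ color $0$, and the ${\rm TSP(G)}$ charge per block follows with no tight packets and no extra OPT accounting. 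You would either need to adopt this idle-detection mechanism or fully work out the burst-size calibration; as written, Case (iii) of your argument is not forced.
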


\begin{proof}
See Appendix \ref{appsec:DirectedGraphProof}.
\end{proof}

\section{Online Scheduling Algorithm for General Metric Space and Directed Weighted Graph} \label{sec:Algorithm}
In this section we design a deterministic online algorithm, for
a weighted directed or undirected graph. Without loss of
generality, we may assume that the triangle inequality holds.
Hence we can view the graph as a complete graph.
In particular, general metric spaces correspond to undirected graphs.
The algorithm achieves a competitive ratio of $1 - o(1)$ when the
minimum weight of the TSP is asymptotically small with respect to
the minimum laxity of the packets. As shown in the previous sections,
this requirement is essential in designing a $1 - o(1)$ competitive algorithm.

\subsection{The algorithm} \label{subsec:OnlineAlgo}

The algorithm is a natural extension of the BG algorithm from
\cite{Azar_BMC}. Algorithm BG works in phases of $\sqrt{CL}$ time-slots.
At each phase it collects the packets for the next phase. It transmits them
according to the colors, from color 0 to color $C-1$. Our algorithm, which we call
TSP-EDF, formally described in Figure \ref{alg:TSPEDF}, works
in phases of $K = \sqrt{{\rm TSP(G)}L}$ time-slots. In each phase the
algorithm transmits packets by colors. The order of the colors is determined
by the minimum TSP. The algorithm achieves a competitive ratio of
$1 - 3\sqrt{{\rm TSP(G)}/ L}$. Clearly, finding the  TSP(G)  is not known to require a
polynomial time (it is an NP hard problem). To make our algorithm polynomial we
use an approximation of the TSP (e.g., 2-approximation for undirected graphs
and ($\log{C}/\log{\log{C}}$)-approximation for directed ones). This will replace
 TSP(G) by  MST(G)  for undirected graphs and by ${\rm  TSP(G)}\log{C}/\log{\log{C}}$
for directed graphs.

\begin{figure}[htbp]
\MyFrame{
\smallskip In each phase $\ell = 1, 2, \ldots$,\  do
\begin{itemize}
\item Reduced the deadline of each untransmitted packet $(r,d,c)$
from $d$ to $K \floor{d/K}$.

\item Let $S^\ell$ be the collection of untransmitted packets
such that their reduced deadline was not exceed. Let $S^{\ell,K}$
be the K-length prefix of EDF schedule (according to the modified
deadline) of $S^\ell$. Let $S_j^{\ell,K} \subseteq S^{\ell,K}$ denote the
subset of packets having color $c_j$.
\\ Let $i_{1}$, $i_{2}$, ..., $i_{C}$ denote the order of the colors in the
minimal TSP (or approximation).

\item $\rho_\ell$ initially consists of all packets of $S_{i_{1}}^{\ell,K}$
scheduled consecutively, then all packets of $S_{i_{2}}^{\ell,K}$ scheduled
consecutively, and so on.

\item $\rho_\ell$ is modified so that $w(v_{i}, v_{j})$ color
transition time-slots are added between each two successive color
groups $S_i^{\ell,K}, S_{j}^{\ell,K}$

\item $\rho_\ell$ is modified so that its length will be exactly $K$.
If the length of $\rho_\ell$ is more than $K$, then the last packets
are dropped. If its length is less than $K$, then it is affixed
with idle time-slots.

\item The packets are transmitted according to $\rho_\ell$.

\end{itemize}

}\caption{Algorithm TSP-EDF.} \label{alg:TSPEDF}
\end{figure}

\smallskip \noindent {\bf Analysis.}
The analysis is similar to the analysis in \cite{Azar_BMC}.
The full proof is in Appendix \ref{appsec:OnlineAlgoAnalysis}.

\bibliographystyle{abbrv}
\bibliography{SchedulingColoredPackets}

\begin{thebibliography}{10}

\bibitem{Noga13}
N.~Alon.
\newblock Personal communication, March 2013.

\bibitem{AndelmanM03}
N.~Andelman and Y.~Mansour.
\newblock Competitive management of non-preemptive queues with multiple values.
\newblock In {\em Proceedings 17th International Conference on Distributed
  Computing}, pages 166--180, 2003.

\bibitem{AndelmanMZ03}
N.~Andelman, Y.~Mansour, and A.~Zhu.
\newblock Competitive queueing policies for qos switches.
\newblock In {\em SODA '03}, pages 761--770, 2003.

\bibitem{Asadpour10}
A.~Asadpour, M.~X. Goemans, A.~Madry, S.~O. Gharan, and A.~Saberi.
\newblock An o(log n/ log log n)-approximation algorithm for the asymmetric
  traveling salesman problem.
\newblock SODA '10, pages 379--389, 2010.

\bibitem{Azar_BMC}
Y.~Azar, U.~Feige, I.~Gamzu, T.~Moscibroda, and P.~Raghavendra.
\newblock Buffer management for colored packets with deadlines.
\newblock SPAA '09, pages 319--327. ACM, 2009.

\bibitem{BansalFKMSS04}
N.~Bansal, L.~Fleischer, T.~Kimbrel, M.~Mahdian, B.~Schieber, and
  M.~Sviridenko.
\newblock Further improvements in competitive guarantees for qos buffering.
\newblock In {\em ICALP '04}, pages 196--207, 2004.

\bibitem{BarYehuda06}
R.~Bar-Yehuda and J.~Laserson.
\newblock Exploiting locality: Approximating sorting buffers.
\newblock In T.~Erlebach and G.~Persinao, editors, {\em Approximation and
  Online Algorithms}, volume 3879 of {\em Lecture Notes in Computer Science},
  pages 69--81. Springer, Berlin, Heidelberg,, 2006.

\bibitem{Bartal96}
Y.~Bartal.
\newblock Probabilistic approximation of metric spaces and its algorithmic
  applications.
\newblock FOCS '96, pages 184 --193, Oct 1996.

\bibitem{BartalCCFJLST04}
Y.~Bartal, F.~Y.~L. Chin, M.~Chrobak, S.~P.~Y. Fung, W.~Jawor, R.~Lavi,
  J.~Sgall, and T.~Tich{\'y}.
\newblock Online competitive algorithms for maximizing weighted throughput of
  unit jobs.
\newblock In {\em STACS '04}, pages 187--198, 2004.

\bibitem{BienkowskiCDHJJS09}
M.~Bienkowski, M.~Chrobak, C.~D{\"u}rr, M.~Hurand, A.~Jez, L.~Jez, and
  G.~Stachowiak.
\newblock Collecting weighted items from a dynamic queue.
\newblock In {\em Proceedings 20th Annual ACM-SIAM Symposium on Discrete
  Algorithms}, 2009.

\bibitem{Blaser03}
M.~Bl\"{a}ser.
\newblock A new approximation algorithm for the asymmetric tsp with triangle
  inequality.
\newblock SODA '03, pages 638--645, 2003.

\bibitem{Bourgain85}
J.~Bourgain.
\newblock On lipschitz embedding of finite metric spaces in hilbert space.
\newblock {\em Israel Journal of Mathematics}, 52:46--52, 1985.

\bibitem{ChinCFJST06}
F.~Y.~L. Chin, M.~Chrobak, S.~P.~Y. Fung, W.~Jawor, J.~Sgall, and T.~Tich{\'y}.
\newblock Online competitive algorithms for maximizing weighted throughput of
  unit jobs.
\newblock {\em J. Discrete Algorithms}, 4(2):255--276, 2006.

\bibitem{ChinF03}
F.~Y.~L. Chin and S.~P.~Y. Fung.
\newblock Online scheduling with partial job values: Does timesharing or
  randomization help?
\newblock {\em Algorithmica}, 37(3):149--164, 2003.

\bibitem{ChrobakJST07}
M.~Chrobak, W.~Jawor, J.~Sgall, and T.~Tich{\'y}.
\newblock Improved online algorithms for buffer management in qos switches.
\newblock {\em ACM Transactions on Algorithms}, 3(4), 2007.

\bibitem{Englert07}
M.~Englert, H.~R\"{a}cke, and M.~Westermann.
\newblock Reordering buffers for general metric spaces.
\newblock STOC '07, pages 556--564, New York, NY, USA, 2007. ACM.

\bibitem{Englert05}
M.~Englert and M.~Westermann.
\newblock Reordering buffer management for non-uniform cost models.
\newblock volume 3580 of {\em ICALP '05}, pages 627--638. 2005.

\bibitem{EnglertW06}
M.~Englert and M.~Westermann.
\newblock Lower and upper bounds on fifo buffer management in qos switches.
\newblock In {\em Proceedings 14th ESA}, pages 352--363, 2006.

\bibitem{Fakcharoenphol03}
J.~Fakcharoenphol, S.~Rao, and K.~Talwar.
\newblock A tight bound on approximating arbitrary metrics by tree metrics.
\newblock STOC '03, pages 448--455, 2003.

\bibitem{FiatMN08}
A.~Fiat, Y.~Mansour, and U.~Nadav.
\newblock Competitive queue management for latency sensitive packets.
\newblock In {\em SODA '08}, pages 228--237, 2008.

\bibitem{Frieze82}
A.~M. Frieze, G.~Galbiati, and F.~Maffioli.
\newblock On the worst-case performance of some algorithms for the asymmetric
  traveling salesman problem.
\newblock {\em Networks}, 12(1):23--39, 1982.

\bibitem{Gamzu07}
I.~Gamzu and D.~Segev.
\newblock Improved online algorithms for the sorting buffer problem.
\newblock In {\em STACS 2007}, volume 4393, pages 658--669. 2007.

\bibitem{Hajek01}
B.~Hajek.
\newblock On the competitiveness of online scheduling of unit-length packets
  with hard deadlines in slotted time.
\newblock In {\em Proceedings of Conference on Information Sciences and
  Systems}, pages 434--438, 2001.

\bibitem{Kaplan05}
H.~Kaplan, M.~Lewenstein, N.~Shafrir, and M.~Sviridenko.
\newblock Approximation algorithms for asymmetric tsp by decomposing directed
  regular multigraphs.
\newblock {\em J. ACM}, 52(4):602--626, 2005.

\bibitem{karp19892k}
R.~M. Karp.
\newblock A 2k-competitive algorithm for the circle.
\newblock {\em Manuscript, August}, 5, 1989.

\bibitem{KesselmanLMPSS04}
A.~Kesselman, Z.~Lotker, Y.~Mansour, B.~Patt-Shamir, B.~Schieber, and
  M.~Sviridenko.
\newblock Buffer overflow management in qos switches.
\newblock {\em SIAM J. Comput.}, 33(3):563--583, 2004.

\bibitem{KesselmanMS05}
A.~Kesselman, Y.~Mansour, and R.~van Stee.
\newblock Improved competitive guarantees for qos buffering.
\newblock {\em Algorithmica}, 43(1-2):63--80, 2005.

\bibitem{Khandekar06}
R.~Khandekar and V.~Pandit.
\newblock Online sorting buffers on line.
\newblock In {\em STACS 2006}, volume 3884, pages 584--595. 2006.

\bibitem{LiSS05}
F.~Li, J.~Sethuraman, and C.~Stein.
\newblock An optimal online algorithm for packet scheduling with agreeable
  deadlines.
\newblock In {\em SODA '05}, pages 801--802, 2005.

\bibitem{LiSS07}
F.~Li, J.~Sethuraman, and C.~Stein.
\newblock Better online buffer management.
\newblock In {\em SODA '07}, pages 199--208, 2007.

\bibitem{Rao99}
S.~Rao.
\newblock Small distortion and volume preserving embeddings for planar and
  euclidean metrics.
\newblock SCG '99, pages 300--306, New York, NY, USA, 1999. ACM.

\bibitem{Racke02}
H.~Räcke, C.~Sohler, and M.~Westermann.
\newblock Online scheduling for sorting buffers.
\newblock In {\em Algorithms — ESA 2002}, volume 2461, pages 820--832. 2002.

\end{thebibliography}

\newpage

\appendix
\section{Proofs} \label{appsec:Proofs}

\subsection{Proof of Theorem \ref{thm:GraphToStar}} \label{appsec:GraphToStarProof}
We prove the theorem by describing a star metric with the required properties.
Let  G be a given metric space on nodes $V$ and a leaf $v_0 \in V$.
Let  T  be the MST for  G  created by means of Prims algorithm
with the root $v_0$. Let  S  be a star metric with leaves V such that for each
$u \in V$, $w_u = w(u, P_{{\rm T}}(u))$. Clearly, $w_{v_0} = 0$.
We prove that  S  and $v_0$ satisfy the theorem's properties:
{\\\bf Property 1:} Clearly, $w({\rm S}) = w({\rm T})$, and since T is a MST for G,
$w({\rm S}) = w({\rm T}) = {\rm MST(G)} = {\rm T}_{\rm G}(V)$.
{\\\bf Property 2:} We have to prove that for every $V' \subseteq V$ such that $v_0 \in V'$,
$w({\rm T}_{\rm G}(V')) \geq w_S(V').$
Let $V' = \{v_0, v_{i_1}, ..., v_{i_r-1}\}$.
Recall that we defined $w_u = w(u, P_{\rm T}(u))$. Clearly,  $w_{\rm S}(V') = \sum_{j=1}^{r-1} w(v_{i_j}, P_{{\rm T}}(v_{i_j}))$.
Hence it  suffices to prove that $w({\rm T}_{\rm G}(V')) \geq \sum_{j=1}^{r-1} w(v_{i_j}, P_{\rm T}(v_{i_j}))$.
The proof is based on the following idea. We begin with the minimum Steiner tree that contains
$V'$ (${\rm T}_{\rm G}(V')$). Then we transform it to an MST on all vertices
by running Prim from $v_0$ and replacing the Steiner tree's edges with Prim's edges.
We prove that each time the algorithm adds an edge $e$ that corresponds to an edge in
$w_S(V')$ it deletes an edge $e'$ from ${\rm T}_{\rm G}(V')$ such that $w(e) \leq w(e')$. Note that we also add edges incident to vertices not in $V'$ in order to maintain a tree.
The weights of these edges are not counted.
Since the algorithm starts with ${\rm T}_{\rm G}(V')$ and finishes
with T, this proves that the property holds (recall that the weight of the edges of  S
is determined by the weight of the edges of  T).
The exact description of our algorithm, called the Embed-Prim algorithm, is provided in Figure \ref{alg:EmbedPrim}.

\begin{figure}[htbp]
\MyFrame{

\begin{enumerate}
\item ${\rm T}' \leftarrow {\rm T}_{\rm G}(V')$ \label{alg:Init}

\item $i \leftarrow 1$ \label{alg:AfterInit}

\item $V_{\rm new} = {v_0} ($let $u_0 = v_0)$

\item Repeat until $V_{\rm new} = V$

\begin{enumerate}

\item Choose an edge $e_i = (w,u_i)$ with minimal weight such that $w$ is in
$V_{\rm new}$ and $u_i$ is not (ties are broken by id).

\item Add $u_i$ to $V_{\rm new}$

\item If $u_i \notin V'$ then add $e_i$ and $u_i$ to ${\rm T}'$. \label{alg:Case1}
\begin{enumerate}
\item Else, if $e_i \in E({\rm T}')$ then replace the edge $e_i$ with the same edge $e_i$
(needed only for the proof). \label{alg:Case2}

\item Else, if $e_i \notin E({\rm T}')$ then \label{alg:Case3}
\begin{enumerate}
\item Add $e_i$ to $T'$.

\item Let $C'$ be the cycle created by adding $e_i$ to ${\rm T}'$. Let $e'$ be the edge with
maximal weight on $C'$ such that $e' \notin \{e_1, ..., e_{i}\}$ and
$e' \cap \{u_0, ..., u_{i-1}\} \neq \emptyset$ (i.e., $e'$ is the maximal among the edges in $C'$
that was added before step \ref{alg:AfterInit}, and one of their nodes is in $V_{\rm new}$.
In Lemma \ref{lemma:EmbedPrimExistEdge} we prove that such an edge always exists).
Remove $e'$ from ${\rm T}'$.
\end{enumerate}
\end{enumerate}

\item $i \leftarrow i + 1$

\end{enumerate}
\end{enumerate}

}\caption{Algorithm Embed-Prim.} \label{alg:EmbedPrim}
\end{figure}

First we show the correctness of Embed-Prim:
\begin{lemma} \label{lemma:EmbedPrimExistEdge}
Let $C'$ be the cycle created in step \ref{alg:Case3}. There exists at least
one edge $e'$ that belongs to $C'$, such that $e' \notin \{e_1, ..., e_{i}\}$ and $e' \cap \{u_0, ..., u_{i-1}\} \neq \emptyset$.
\end{lemma}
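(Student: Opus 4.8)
The plan is to exhibit the desired edge $e'$ explicitly by maintaining two invariants of the Embed-Prim algorithm and then invoking a cut-crossing (parity) argument on the cycle $C'$. Throughout, write $V_{\rm new} = \{u_0, \ldots, u_{i-1}\}$ for the set of vertices visited before iteration $i$, and recall that at iteration $i$ we add the Prim edge $e_i = (w, u_i)$ with $w \in V_{\rm new}$ and $u_i \notin V_{\rm new}$; note also that $u_0, \ldots, u_i$ are pairwise distinct, since Prim visits a new vertex at every step.

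First I would record the standard Prim invariant in the precise form I need: every previously chosen edge $e_j$ with $j < i$ has \emph{both} of its endpoints inside $V_{\rm new}$. Indeed, $e_j = (w_j, u_j)$ was selected at iteration $j$ with $w_j$ already visited and $u_j$ becoming visited at that step, so for $j < i$ both $w_j, u_j \in \{u_0, \ldots, u_{i-1}\} = V_{\rm new}$. Consequently, none of $e_1, \ldots, e_{i-1}$ can have exactly one endpoint in $V_{\rm new}$.

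Next I would consider the cut $(V_{\rm new},\, V \setminus V_{\rm new})$ and call an edge \emph{crossing} if it has exactly one endpoint in $V_{\rm new}$. The cycle $C'$ contains $w \in V_{\rm new}$ and $u_i \notin V_{\rm new}$ (it consists of $e_i$ together with the $T'$-path joining $w$ and $u_i$), so it meets both sides of the cut. Since any cycle crosses a cut an even number of times, $C'$ has an even, hence at least two, crossing edges. The edge $e_i = (w, u_i)$ is one of them, so there is a second crossing edge $e' \neq e_i$ lying on $C'$.

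Finally I would check that $e'$ meets both required conditions. Being a crossing edge, $e'$ has an endpoint in $V_{\rm new} = \{u_0, \ldots, u_{i-1}\}$, so $e' \cap \{u_0, \ldots, u_{i-1}\} \neq \emptyset$; and $e' \notin \{e_1, \ldots, e_i\}$, because $e' \neq e_i$ by choice and $e'$ cannot equal any $e_j$ with $j < i$ (by the invariant those edges have both endpoints in $V_{\rm new}$ and therefore do not cross the cut). I do not anticipate a genuine obstacle: the conceptual heart is the parity observation that a cycle meeting both sides of a cut must cross it again beyond $e_i$, while the only point demanding care is the bookkeeping that the first $i$ Prim edges are exactly $\{e_1, \ldots, e_i\}$ and that each earlier one has already "closed up" within $V_{\rm new}$, which is precisely what forces the second crossing edge to be a non-Prim (tree) edge of $T'$.
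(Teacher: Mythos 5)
Your proof is correct, but it takes a genuinely different route from the paper's. The paper argues via acyclicity: the edges that Embed-Prim adds after the initialization step are Prim edges, each attaching a new vertex, so they cannot close a cycle among themselves; hence $C'$ must contain at least one edge of the original Steiner tree ${\rm T}_{\rm G}(V')$, and the paper then simply \emph{asserts} that one of these original edges touches $\{u_0, \dots, u_{i-1}\}$ --- a step it leaves without justification (to complete it one would, e.g., walk along $C'$ from $w$ away from $e_i$, note that traversing earlier Prim edges keeps you inside $V_{\rm new}$, and conclude that the first edge by which the walk exits $V_{\rm new}$ toward $u_i$ is an original edge incident to $V_{\rm new}$). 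You instead run a parity argument on the cut $(V_{\rm new}, V \setminus V_{\rm new})$: since $e_i$ crosses the cut and any cycle crosses a cut an even number of times, $C'$ contains a second crossing edge $e'$, which automatically has an endpoint in $\{u_0, \dots, u_{i-1}\}$, and by your invariant that each $e_j$ with $j < i$ has both endpoints in $V_{\rm new}$, it cannot lie in $\{e_1, \dots, e_i\}$. Your approach buys two things: it obtains both required properties of $e'$ in a single stroke, and it supplies precisely the justification that the paper's final sentence omits; moreover it establishes the slightly stronger fact that $e'$ has \emph{exactly} one endpoint in $V_{\rm new}$, whereas the paper's (completed) argument only identifies $e'$ as a step-1 edge touching $V_{\rm new}$. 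The one piece of bookkeeping your write-up handles correctly and which deserves the care you give it is the timing convention: the algorithm adds $u_i$ to $V_{\rm new}$ before step 4(c)(ii) executes, so your cut must be taken with respect to $\{u_0, \dots, u_{i-1}\}$, the visited set \emph{before} iteration $i$, exactly as you define it.
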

\begin{proof}
Note that a cycle is created in step \ref{alg:Case3} since adding an edge to a tree always creates a cycle.
Similar to Prim, the edges that Embed-Prim adds after step \ref{alg:Init} do not create a cycle.
Therefore $C'$ must contains edges added in step \ref{alg:Init}.
At least one of these edges must touch one of the vertices
$\{u_0, ..., u_{i-1}\}$
\end{proof}

\begin{lemma} \label{lemma:EmbedPrimTree}
After each step of Embed-Prim, ${\rm T}'$ is a tree which contains $V'$.
\end{lemma}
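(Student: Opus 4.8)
The plan is to argue by induction on the iterations of the repeat loop in Embed-Prim, establishing the two desired invariants simultaneously: that ${\rm T}'$ is a tree and that $V' \subseteq V({\rm T}')$. To make the case analysis go through cleanly it helps to carry an auxiliary invariant on the vertex set, namely $V({\rm T}') = V' \cup V_{\rm new}$ --- intuitively, every vertex of ${\rm T}'$ lying outside $V'$ has already been absorbed into $V_{\rm new}$ by Prim. For the base case, after step~\ref{alg:Init} we have ${\rm T}' = {\rm T}_{\rm G}(V')$, which is a tree containing $V'$ by definition, while $V_{\rm new} = \{v_0\} \subseteq V'$, so all invariants hold.

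For the inductive step I would assume the invariants hold at the start of an iteration and let $e_i = (w, u_i)$ be the edge chosen by Prim, with $w \in V_{\rm new}$ and $u_i \notin V_{\rm new}$. The three branches are handled separately. In step~\ref{alg:Case1} ($u_i \notin V'$), the vertex-set invariant gives $u_i \notin V({\rm T}')$ whereas $w \in V_{\rm new} \subseteq V({\rm T}')$, so attaching $e_i$ and $u_i$ is a pendant addition: it keeps ${\rm T}'$ connected and acyclic, does not disturb $V' \subseteq V({\rm T}')$, and restores $V({\rm T}') = V' \cup V_{\rm new}$ since $u_i$ now enters both sides. In step~\ref{alg:Case2} the graph is unchanged, so every invariant persists trivially (only $V_{\rm new}$ gains $u_i$, which already lay in $V({\rm T}')$ because $u_i \in V'$). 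In step~\ref{alg:Case3} ($u_i \in V'$, $e_i \notin E({\rm T}')$), inserting $e_i$ into the tree ${\rm T}'$ creates exactly one cycle $C'$; Lemma~\ref{lemma:EmbedPrimExistEdge} supplies an edge $e' \in C'$ eligible for deletion. Deleting an edge that lies on a cycle of a connected graph leaves it connected and destroys the unique new cycle, so ${\rm T}'$ is a tree again, and since no vertex is removed both $V' \subseteq V({\rm T}')$ and $V({\rm T}') = V' \cup V_{\rm new}$ survive.

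The step I expect to be the main obstacle is step~\ref{alg:Case3}: one must verify rigorously that adding $e_i$ produces a single cycle and that removing a cycle-edge neither disconnects ${\rm T}'$ nor leaves any cycle behind. The clean way to close this is a counting argument --- the insertion and deletion each change $|E({\rm T}')|$ by one, so $|E({\rm T}')| = |V({\rm T}')| - 1$ is restored --- together with the remark that the two endpoints of $e'$ remain joined through the rest of $C'$, which rules out disconnection. A secondary subtlety worth flagging is the vertex-set invariant relied upon in step~\ref{alg:Case1}: it is immediate only when ${\rm T}_{\rm G}(V')$ spans exactly $V'$. If the minimum Steiner tree uses a vertex outside $V'$, that vertex already sits in ${\rm T}'$ before Prim reaches it, and I would need to strengthen the invariant to track such original vertices and route their processing through the step~\ref{alg:Case3} logic, so that no accidental cycle is created when they are handled.
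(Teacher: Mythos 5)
Your induction is the paper's own argument, just spelled out: the paper proves this lemma in three sentences --- ${\rm T}'$ starts as the tree ${\rm T}_{\rm G}(V')$, vertices are never removed, and whenever an added edge creates a cycle the cycle is opened by deleting an edge from it. Your case analysis of steps \ref{alg:Case1}--\ref{alg:Case3}, with the edge-count argument $|E({\rm T}')| = |V({\rm T}')| - 1$ for step \ref{alg:Case3}, is a faithful elaboration of exactly that.

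The subtlety you flag at the end, however, is genuine --- and it is a hole in the paper's terse proof as much as in your draft. Since ${\rm T}_{\rm G}(V')$ is a minimum \emph{Steiner} tree, it may contain vertices of $V \setminus V'$ (in a star metric with $V'$ the leaf set, the center is such a Steiner point), so your auxiliary invariant $V({\rm T}') = V' \cup V_{\rm new}$ fails already at step \ref{alg:Init}. Worse, on exactly this configuration the algorithm as written misbehaves: if Prim selects $u_i \notin V'$ that is already a Steiner vertex of ${\rm T}'$, step \ref{alg:Case1} adds $e_i$ with no compensating deletion, creating a cycle that is never opened --- the paper's claim that every created cycle is opened is enforced only in step \ref{alg:Case3}. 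Your proposed repair is the right one: whenever $u_i$ already lies in $V({\rm T}')$, whether or not $u_i \in V'$, route the insertion through the step \ref{alg:Case3} deletion logic. One can check the surrounding machinery survives this change: the argument of Lemma \ref{lemma:EmbedPrimExistEdge} still yields an eligible $e'$ (walk along the unique ${\rm T}'$-path from $u_i$ to $w$ and take the edge at the first vertex of $\{u_0, \ldots, u_{i-1}\}$ encountered; it is an original edge, not a Prim edge, with exactly one endpoint in $V_{\rm new}$), and since $u_i \notin V'$ the inserted edge contributes nothing to $w_{\rm S}(V')$, so Lemma \ref{lemma:EmbedPrimInvariant} and the final weight comparison are untouched. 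With the strengthened invariant you sketch, your write-up is more complete than the paper's.
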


\begin{proof}
At the beginning ${\rm T}'$ is ${\rm T}_{\rm G}(V')$, which is a tree that contains $V'$.
We never remove vertices and hence ${\rm T}'$ always contains $V'$.
Whenever we add an edge that creates a cycle we open the cycle
by removing an edge from it.
\end{proof}

Now we claim that Embed-Prim satisfies the following invariant:
\begin{lemma} \label{lemma:EmbedPrimInvariant}
Each time Embed-Prim adds an edge $e$ that corresponds to an edge in
$w_{\rm S}(V')$, it deletes an edge $e'$ from ${\rm T}_{\rm G}(V')$ such that $w(e) \leq w(e')$.
\end{lemma}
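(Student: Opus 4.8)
The plan is to track, throughout the execution of Embed-Prim, two kinds of edges in ${\rm T}'$: the \emph{original} edges, namely those present right after the initialization in step \ref{alg:Init} (these are exactly the edges of ${\rm T}_{\rm G}(V')$ not yet removed), and the Prim edges $e_1, e_2, \ldots$ added during the loop. First I would pin down precisely which added edges ``correspond to an edge in $w_{\rm S}(V')$.'' Since ${\rm T}$ is the MST produced by Prim from $v_0$ and Embed-Prim selects edges by the same minimum-weight-across-the-cut rule, the edge $e_i$ by which Embed-Prim admits $u_i$ is exactly the parent edge $(u_i, P_{\rm T}(u_i))$, whose weight is $w_{u_i}$. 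Hence $e = e_i$ contributes to $w_{\rm S}(V')$ precisely when $u_i \in V' \setminus \{v_0\}$, and in that case $w(e) = w_{u_i} = w(u_i, P_{\rm T}(u_i))$. Such edges are added only in the branches labelled \ref{alg:Case2} and \ref{alg:Case3}; in branch \ref{alg:Case1} we have $u_i \notin V'$, so no contribution to $w_{\rm S}(V')$ occurs and nothing need be deleted.

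Next I would dispose of branch \ref{alg:Case2}, where $e_i \in E({\rm T}')$. I would first observe that $e_i$ cannot coincide with an earlier Prim edge: each $e_j$ with $j < i$ has both endpoints in $V_{\rm new}$ by the time step $i$ is reached, whereas $e_i = (w, u_i)$ has $u_i \notin V_{\rm new}$. Therefore $e_i \in E({\rm T}')$ forces $e_i$ to be an (un-removed) original edge, i.e.\ $e_i \in {\rm T}_{\rm G}(V')$. The algorithm ``replaces $e_i$ by itself,'' which I read as deleting the original copy and re-inserting it as a Prim edge; the deleted edge is then $e' = e_i \in {\rm T}_{\rm G}(V')$ and $w(e) = w(e')$, so the invariant holds with equality.

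The substance of the proof, and the step I expect to be the main obstacle, is branch \ref{alg:Case3}, where adding $e_i$ creates the cycle $C'$ and we delete the maximum-weight edge $e'$ on $C'$ that is original and meets $V_{\rm new}$ (such an edge exists by Lemma \ref{lemma:EmbedPrimExistEdge}). Here I would invoke Prim's cut property. Writing $A = V_{\rm new} = \{u_0, \ldots, u_{i-1}\}$, the selection rule makes $e_i$ a minimum-weight edge crossing the cut $(A, V \setminus A)$, so every edge crossing this cut has weight at least $w(e_i)$. The cycle $C'$ consists of $e_i$ together with a ${\rm T}'$-path from $u_i \notin A$ back to $w \in A$, and that path must contain at least one edge $e''$ crossing the cut. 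I would then argue that $e''$ is a legitimate candidate for $e'$: it meets $A$ by definition of crossing, and it cannot be any of $e_1, \ldots, e_i$ since every previously added Prim edge has both endpoints inside $A$ and thus does not cross the cut, so $e''$ is original. Consequently $w(e') \geq w(e'') \geq w(e_i) = w(e)$, where the first inequality is maximality of $e'$ and the second is the cut property. Finally I would note that each such $e'$ is genuinely removed from ${\rm T}'$ and belongs to ${\rm T}_{\rm G}(V')$, so distinct corresponding additions are charged to distinct original edges, which is exactly the content of the invariant.
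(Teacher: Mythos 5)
Your proof is correct and follows the same skeleton as the paper's: a case analysis over the three branches, with branch \ref{alg:Case1} contributing nothing to $w_{\rm S}(V')$, branch \ref{alg:Case2} giving equality, and branch \ref{alg:Case3} handled by the greedy selection rule; the identification $e_i = (u_i, P_{\rm T}(u_i))$, which you use to pin down which additions correspond to terms of $w_{\rm S}(V')$, also appears in the paper (just after the lemma, where the invariant is summed). The one place you genuinely diverge is branch \ref{alg:Case3}, and there your version is \emph{stronger} than the paper's. The paper disposes of this case in one line --- ``Embed-Prim could have added edge $e'$, but did choose the edge $e$ instead, so $w(e) \leq w(e')$'' --- which tacitly assumes that $e'$ itself crosses the cut $(V_{\rm new}, V \setminus V_{\rm new})$ at step $i$; but the eligibility condition in the algorithm only requires $e' \cap \{u_0, \ldots, u_{i-1}\} \neq \emptyset$, so $e'$ may have \emph{both} endpoints in $V_{\rm new}$, in which case the direct greedy comparison does not apply to $e'$. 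Your detour repairs exactly this: the cycle $C'$ must contain a cut-crossing edge $e''$ other than $e_i$; such an edge is original because every previously added Prim edge (in any of the three branches) has both endpoints in $V_{\rm new}$, so it is eligible; and then $w(e') \geq w(e'')$ by the maximality in the deletion rule while $w(e'') \geq w(e_i)$ by the cut property. So you get the paper's conclusion without the paper's unstated assumption. Your closing remark that distinct additions are charged to distinct original edges (since eligible edges exclude $\{e_1, \ldots, e_i\}$ and each $e'$ is physically removed from ${\rm T}'$) is likewise a point the paper needs for the final summation but leaves implicit. In short: same approach, but your Case-\ref{alg:Case3} argument is the rigorous form of the paper's terser one.
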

\begin{proof}
Step \ref{alg:Case1} is irrelevant, since the edge does not correspond to an edge
in $w_{\rm S}(V')$ (the vertex that was added by Embed-Prim is not in $V'$).
In step \ref{alg:Case2}, $w(e) = w(e')$.
In step \ref{alg:Case3}, since Embed-Prim could have added edge $e$', but did choose the edge $e$ instead,
$w(e) \leq w(e')$ (recall that Embed-Prim always chooses the edge with the minimal weight).
\end{proof}

\par Now we are ready to prove that S satisfies the second property of the embedding.
By the definition of Prim $e_i = (u_i, P_{{\rm T}}(u_i))$.
Hence, $\sum_{j=1}^{r-1} w(e_{i_j}) \!=\! \sum_{j=1}^{r-1} w(v_{i_j}, P_{{\rm T}}(v_{i_j}))$.
Let $e'_i$ be the edge deleted from $T'$ when edge $e_i$ was added (steps \ref{alg:Case2}, \ref{alg:Case3}). Then
$$
w_{\rm S}(V') = \sum_{j=1}^{r-1} w(v_{i_j}, P_{{\rm T}}(v_{i_j}))
=\sum_{j=1}^{r-1} w(e_{i_j})
\leq \sum_{j=1}^{r-1} w(e'_{i_j})
\leq w({\rm T}_{\rm G}(V')).
$$
where the first equality follows from the definition, the first inequality result from the invariance, and the last inequality follows from the definition.

\subsection{Proof of Theorem \ref{thm:LowerBoundMetricSpace}} \label{appsec:MetricSpaceProof}
Let  G  be a given metric space on nodes $V$. We use the embedding from Theorem \ref{thm:GraphToStar}.
Let ${\rm S}$, $v_0$ be the output of the embedding. Let $\sigma({\rm S}, \alg)$ be the sequence described
in Theorem \ref{thm:LowerBoundStar}, when $v_0$ is type A color and the other colors are
type B. Recall that, by definition, $F = \sqrt{w({\rm S})L}$.
We use $\sigma$ for $\alg$ on G.
We can assume, without loss of generality, that $\delta < 1$ since otherwise
one may use packets with laxity of  MST(G)  (i.e., $\delta = 1$), and obtain a hardness result
of $\alpha  < 1$ for some constant $\alpha$.
Consider the following possible cases, similar to the proof of Theorem \ref{thm:LowerBoundStar}.

\begin{enumerate}
\item In Termination Case 1 there exists a block $i$ such that, with probability at least 1/4, at the
end of the block there are at least $F/2$ untransmitted type B packets. In Theorem
\ref{thm:LowerBoundStar} we proved that:
\begin{itemize}
\item The sequence consists of up to $3L$ packets.

\item The expected number of packets $\alg$ missed is at least $F/8 - 1/4$.

\item $\opt$ missed up to ${\rm TSP(G)} \leq 2{\rm MST(G)}$ packets.
\end{itemize}
Therefore, the competitive ratio depends only on $F$,  MST(G)  and $L$:
\begin{eqnarray*}
\frac{E(\alg(\sigma))}{\opt(\sigma)} &\leq& \frac{3L - F/8 + 1/4}{3L - 2{\rm MST(G)}} = \frac{3L - \frac{1}{8}\left(\sqrt{w({\rm S})L}\right) + 1/4}{3L - 2{\rm MST(G)}} \\
&=& \frac{3L - \frac{1}{8}\left(\sqrt{{\rm MST(G)}L}\right) + 1/4}{3L - 2{\rm MST(G)}} = 1 - \Omega\left(\sqrt{{\rm MST(G) }/ L}\right) \ .
\end{eqnarray*}
Here the second equality results from the fact that $w({\rm T}) = {\rm MST(G)}$.

\item In Termination Case 2 there exists a block $i$ such that, with probability at least 1/4, at most $2F$ packets
were transmitted during the block. In Theorem \ref{thm:LowerBoundStar} we proved that:
\begin{itemize}
\item At most $3L$ packets can be transmitted.

\item The expected number of packets $\alg$ missed is at least $F/4$.

\item $\opt ^\prime$ transmitted 3L type A packets.
\end{itemize}
Therefore, $\opt \geq \opt ^\prime = 3L$ and the competitive ratio depends only on $F$ and $L$:
\begin{eqnarray*}
\frac{E(\alg(\sigma))}{\opt(\sigma)} &\leq& \frac{3L - F/4}{3L} = \frac{3L - \frac{1}{4}\left(\sqrt{w({\rm S})L}\right)}{3L} \\
&=& 1 - \Omega\left(\sqrt{w({\rm S}) / L}\right) = 1 - \Omega\left(\sqrt{{\rm MST(G)} / L}\right) \ .
\end{eqnarray*}
Here the last equality results from the fact that $w({\rm S}) = {\rm MST(G)}$.

\item In Termination Case 3 $\alg$ transmitted type A packet and at least $F/2$ type B packets at
each block. In Theorem \ref{thm:LowerBoundStar} we proved that:
\begin{itemize}
\item At most $3L$ packets can be transmitted.

\item The expected number of packets $\alg$ missed in each block
due to color transitions is at least $\frac{1}{8} \frac{w({\rm S})}{2}$.

\item $\opt ^\prime$ transmitted $3L$ type A packets. Therefore $\opt \geq \opt ^\prime = 3L$.
\end{itemize}
By the first property required by this theorem, each sequence of color transitions in
G requires more transition time than in S. Therefore, the expected number of packets
$\alg$ missed per block is at least $\frac{1}{8} \frac{w({\rm S})}{2}$.
Since the number of blocks is $\frac{1}{3}\sqrt{\frac{L}{w({\rm S})}}$, we conclude that the competitive ratio is:
\begin{align*}
\frac{E(\alg(\sigma))}{\opt(\sigma)} &\leq  \frac{3L - \left(\frac{1}{3}\sqrt{L/w(S)} \right)\frac{w({\rm S})}{16}}{3L}\\
&= 1 - \Omega\left(\sqrt{w({\rm S}) / L}\right) = 1 - \Omega\left(\sqrt{{\rm MST(G)} / L}\right) \ .
\end{align*}
Here the last equality result from the fact that $w(S) = MST(G)$.
\end{enumerate}

\subsection{Proof of Theorem \ref{thm:LowerBoundGeneralGraph}} \label{appsec:DirectedGraphProof}

We begin by describing the packet sequence $\sigma(G, \alg)$.
{\\\bf Sequence structure:}
We assume, without loss of generality, that $\gamma < 1 / \log{C}$, since otherwise
one may use packets with laxity of ${\rm TSP(G)}/\log{C}$ (i.e., $\gamma = 1 / \log{C}$), and obtain a hardness result
of $1 - \Omega\big(\frac{1}{\log{C}}\big)$.
We define $H = \sqrt{ {{\rm TSP(G)}L}/{\log{C}}}$,
$N = \frac{1}{5}\sqrt{{L}/({\rm TSP(G)} \log{C})}$. There are up to $N$ blocks.
Type A packets $(t,3L,0)$ are released at each time unit $t$ in each block.
Let $t_i$ denote the beginning time of block $i$.
The first block starts at time-slot 1 and blocks follow one after another.
A block consist of phases: start phase, regular phases and possibly an end phase.
The start phase consist of $2H \log{C}$ time-slots.
Every block that does not satisfy Condition $(I_1)$ (defined later in the paper)
has an end phase with $2H \log{C}$ time-slots.
The number of regular phases is at least 1 and at most $\log{C}$, and depends on the behavior of
 $\alg$. Each regular phase consists of $H$ time-slots.
We denote by $t_{i,j}$ the beginning time of regular phase $j$ in block $i$.
The first regular phase starts after the start phase,
and regular phases follow one after another.
At the beginning of the first regular phase, $H$ packets arrive of various
colors. Specifically, ${H}/( C-1)$ type B packets $(t_{i,1}, L + t_{i,1}, c)$
for each $1 \leq c \leq C-1$ are released.
Once the adversary stops the blocks (we call these the final event), additional
packets arrive. The exact sequence is defined as follows:

\begin{enumerate}
\item $i \leftarrow 1$, $j \leftarrow 1$
\item Add regular phase $j$ of block $i$:
Let $c_{k}$, $1 \leq k \leq r$, be the colors that were not transmitted during the previous
regular phases. $H$ packets of various colors arrive at once.
\\ Specifically, $H / r$ packets ($t_{i,j}$, $L + t_{i,j}$, $c_{k}$) for $1 \leq k \leq r$
are released.
 \label{subsec:AddBlockGeneralGraph}
\item If there are at least $H/2$ untransmitted type B packets at the end of the phase
(denoted as Condition $(I_1)$), $L$ packets
$(t_{i,j+1}, L + t_{i,j+1}, 1)$ are released and the sequence is terminated.
Clearly, $t_{i,j+1}$ is the time of the final event.
Call this Termination Case 1.

\item Else, during the regular phases packets from all type B colors were transmitted.
We complete the block by an end phase, and consider the following cases:

\begin{enumerate}
\item If no type A packet was transmitted during the start phase
or during the end phase (denoted as Condition $(I_2)$), then 3L packets
$(t_{i+1,1},3L,0)$ are released, and the sequence is terminated.
Clearly $t_{i+1,1}$ is the time of the final event.
Call this Termination Case 2.

\item Else, if $i = N$ (there are $N$ blocks and none of them satisfied Condition $(I_1)$ or $(I_2)$).
Let $t$ be the time-slot consecutive to the end of the last block.
$2L$ packets $(t,3L,0)$ are released, and the sequence is terminated.
Clearly, $t$ is the time of the final event.
Call this Termination Case 3.

\item Else ($i < N$, start new block)

 \begin{enumerate}
 \item $i \leftarrow i + 1$, $j \leftarrow 1$

 \item Goto \ref{subsec:AddBlockGeneralGraph}.
 \end{enumerate}

\end{enumerate}

\item Else (start new phase)
\begin{enumerate}
 \item $j \leftarrow j + 1$

 \item Goto \ref{subsec:AddBlockGeneralGraph}.
 \end{enumerate}

\end{enumerate}
\begin{lemma}
There are at most $\log{C}$ regular phases in each block.
\end{lemma}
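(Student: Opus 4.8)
There are at most $\log C$ regular phases in each block.

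The plan is to track how the set of \emph{untransmitted} type B colors shrinks across successive regular phases within a single block, and to show that each regular phase (except possibly the last) roughly halves the number of remaining colors. The key structural fact to exploit is the termination rule: a block continues from regular phase $j$ to phase $j+1$ only when Condition $(I_1)$ fails, i.e., fewer than $H/2$ type B packets are left untransmitted at the end of phase $j$. Since exactly $H$ type B packets arrive at the start of phase $j$ (the $H/r$ packets in each of the $r$ remaining colors $c_1,\dots,c_r$), the algorithm must have transmitted more than $H/2$ of them during that phase in order to continue.

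The central step I would carry out is the following counting argument. Suppose $r$ colors remain untransmitted at the beginning of regular phase $j$; by the sequence construction, each of these colors contributes exactly $H/r$ freshly arrived packets. If the algorithm transmits packets from a subset of these colors, each color it ``touches'' removes at most $H/r$ packets from the untransmitted pool for that color. For more than $H/2$ of the $H$ arriving packets to be transmitted, the algorithm must touch more than $r/2$ of the $r$ remaining colors --- because touching $k$ colors accounts for at most $k \cdot (H/r)$ transmitted packets, and $k \cdot (H/r) > H/2$ forces $k > r/2$. Here I am using that, to \emph{empty} a color from the ``remaining'' set, the algorithm must transmit at least one of its packets; more carefully, a color only leaves the remaining set once all (or the relevant portion of) its packets are handled, so I would define ``colors not transmitted during previous regular phases'' precisely and verify that transmitting more than $H/2$ packets necessarily clears more than half of the $r$ current colors.

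Consequently, the number of remaining type B colors drops from $r$ to strictly less than $r/2$ after each regular phase that does not end the block. Starting from $C-1 < C$ colors at the first regular phase, after $k$ such halvings fewer than $(C-1)/2^{k}$ colors remain; once this quantity drops below $1$ the set is empty and no further regular phase can begin (there are no remaining colors from which $H$ packets could arrive). This happens after at most $\lceil \log_2(C-1)\rceil \le \log C$ phases, giving the bound.

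The main obstacle I anticipate is making the ``more than half the colors are cleared'' step fully rigorous: I must pin down the exact bookkeeping of what it means for a color to count as ``not transmitted during the previous regular phases,'' and confirm that transmitting strictly more than $H/2$ of the $H$ uniformly-distributed (i.e., $H/r$ per color) arriving packets genuinely forces the remaining-color count below $r/2$ rather than merely at most $r/2$. The factor-of-two slack in Condition $(I_1)$ is precisely what makes this work, so I would be careful to track the strict inequalities through the chain $k\cdot(H/r) \ge \text{(packets transmitted)} > H/2 \implies k > r/2 \implies r - k < r/2$, and then justify that the surviving $r-k$ colors are exactly those carried into the next phase.
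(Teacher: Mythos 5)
Your proposal is correct and takes essentially the same approach as the paper: the paper's proof simply asserts that each regular phase forces the algorithm to transmit packets from at least half of the remaining type B colors, and your counting chain $k\cdot(H/r) \geq \mbox{(new packets transmitted)} > H/2 \Rightarrow k > r/2 \Rightarrow r-k < r/2$ is precisely the justification the paper leaves implicit. The bookkeeping concern you flag resolves immediately, since having fewer than $H/2$ untransmitted type B packets at the end of the phase in particular bounds the untransmitted portion of the $H$ freshly arrived packets, so more than $H/2$ of them were transmitted during that phase and the halving goes through.
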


\begin{proof}
During each regular phase at least half of the remaining type B colors
(i.e., type B colors that were not transmitted during previous regular phases)
are transmitted. It is clear that after at most $\log{C}$ regular phases, all
type B packets are transmitted.
\end{proof}
We make the following
{\\\bf Observations:}
\begin{enumerate}
\item No type B packets arrive during the start phase or in the end phase of a block.

\item Since there are at most $\frac{1}{5}\sqrt{ {L}{\rm TSP(G) }\log{C}}$ blocks,
and each block consists of at most $5H \log{C}$ time-slots
($2H \log{C}$ during the start phase, $2H \log{C}$ during
the end phase, up to $H \log{C}$ during the phases stage),
the time of the final event is at most $L+1$.

\item Exactly one type A packet arrives at each time-slot until the final event,
which sums to at most $L$ type A packets before (not including) the final release time.

\item In each block, the number of type B packets released is at most 1/5 of its size
(type B packets are released only during the regular phases).
Hence, there are at most $L/5$ type B packets before (not including) the final event.
\end{enumerate}
Now we can analyze the competitive ratio of $\sigma({\rm G}, \alg)$.
Consider the following possible sequences (according to the termination type):
\begin{enumerate}
\item Termination Case 1: Let $Y$ denote the number of packets in the sequence.
According to the observations, the sequence
consists of at most $L$ type A packets, and at most $6L/5$ type B packets ($L/5$ until the
final event and $L$ at the final event). Hence, $Y \leq L + (6L/5) \leq 3L$.

\begin{enumerate}
\item {\bf We bound the performance of ALG:}
At time $t_{i,j+1}$ $\alg$ has $L + H/2$ untransmitted type B packets. Since type B packets
have laxity of $L$, $\alg$ can transmits at most $L+1$ of them and drop at least $H/2 - 1$.
The number of transmitted packets is
$$
\alg(\sigma) \leq Y - H/2 + 1.
$$

\item {\bf We bound the performance of $OPT ^\prime$:}
$\opt ^\prime$ transmits the packets in three stages:
\begin{itemize}
\item {\bf Type B packets that arrive before the final event:}
Recall that no type B packets arrive during the end phase.
Let  ${\rm G}_{i,j}$ be the spanning subgraph of G that contains only the
colors arrived in regular phase $j$ of block $i$.
Let $c_{i,j,k}$, $1 \leq k \leq r_{i,j}$, be the order of colors in the minimal TSP
for   ${\rm G}_{i,j}$. In each regular phase (e.g. regular phase $i$ in block $j$) $\opt ^\prime$ transmits
the type B packets according to that order --- first all the packets with color
$c_{i,j,1}$, then all the packets with color $c_{i,j,2}$, and so on.
It is clear that $\opt ^\prime$ needs at most $H+{\rm TSP(G)}$ time-slots to transmits the packets ($H$ for
packet transmission and TSP(G)  for color transition).
$\opt ^\prime$ transmits the packets ordered by arrival time. Each regular phase consist of $H$
time-slots, but $\opt ^\prime$ needs up to $H+{\rm TSP(G)}$ time-slots to transmit the packets. Therefore, at the end
of the last regular phase in each block there are at most ${\rm TSP(G)} \log{C}$ untransmitted
packets. These packets are transmitted in the end phase.
There are enough time-slots, since $H \log{C} > {\rm TSP(G)} \log{C}$
(recall that $L > {\rm TSP(G)} \log{C}$ and $H = \sqrt{ {{\rm TSP(G)}L}/{\log{C}}}$).
In the last block $\opt'$ missed up to ${\rm TSP(G)} \log{C}$ packets (since there is no end phase).

\item {\bf Type B packets that arrive during the final event:}
The $L$ packets $(t_{i,j+1},$ $L+t_{i,j+1},1)$ that arrived during
the final event are transmitted by $\opt ^\prime$ consequtively from time $t_{i,j+1}$. $\opt ^\prime$
can transmits $L$ packets except for one transition period, and hence may lose at most TSP(G)
packets.

\item {\bf Type A packets:}
According to the observations, the time of the final event $t_{i,j+1}$ is at most $L+1$.
We conclude that $\opt ^\prime$ transmits type B packets until time unit $2L$, and type A packets
between $2L+1$ and $3L$.
\end{itemize}
Hence:
$$
\opt(\sigma) \geq \opt ^\prime (\sigma) \geq Y - (\log{C}+1){\rm TSP(G)}
$$
\end{enumerate}
The competitive ratio is

\begin{eqnarray*}
\frac{\alg(\sigma)}{\opt(\sigma)} & \leq & \frac{Y - H/2 + 1}{Y - (\log{C}+1){\rm TSP(G)}} \leq
\frac{3L - \frac{1}{2}\left(\sqrt{\frac{{\rm TSP(G)}L}{\log{C}}}\right) + 1}{3L - (\log{C}+1){\rm TSP(G)}} \\
& = & 1 - \Omega\left(\sqrt{\frac{{\rm TSP(G)}}{L \log{C}}}\right) \ .
\end{eqnarray*}

Here the second inequality results from the fact that the number is below 1 and the
numerator and the denominator increase by the same value.

\item Termination Case 2: The sequence consists of more than $3L$ type A packets, all deadlines are
at most $3L$.
\begin{itemize}

\item {\bf We bound the performance of ALG:}
$\alg$ did not transmit type A packets during the start phase, or during
the end phase. Condition $(I_1)$ guarantees that at the end of each regular phase there are
up to $H$/2 untransmitted type B packets. Recall that no type B packets arrive during the
start phase or in the end phase of a block. If $\alg$ did not transmit a type A packet
during the $2H \log{C}$ time-slots of the start phase, there are more than
$H \log{C}$ idle time-slots during that period (it had only $H/2$
untransmitted type B packets). By a symmetric argument, if $\alg$ did not transmit
a type A packet during the $2H \log{C}$ time-slots of the end phase,
there are more than $H \log{C}$ idle time-slots during that period.
We conclude that the number of transmitted packets is
$$
\alg(\sigma) \leq 3L - H \log{C}.
$$
\item {\bf We bound the performance of $OPT ^\prime$:}
At each time unit until the final event, $\opt ^\prime$ transmits the type A packet
that arrived at that particular time unit. From the final event
and until time unit $3L$, $\opt ^\prime$ transmits the type A packets that arrived at
the final event. Therefore, $\opt ^\prime$ transmits $3L$ type A packets and $\opt \geq \opt ^\prime = 3L$.
\end{itemize}
The competitive ratio is
\begin{eqnarray*}
\frac{\alg(\sigma)}{\opt(\sigma)} & \leq & \frac{3L - H \log{C}}{3L} =
\frac{3L - \sqrt{\frac{{\rm TSP(G)}L}{\log{C}}} \log{C}}{3L} \\
& = & 1 - \Omega\left(\sqrt{\frac{{\rm TSP(G)} \log{C}}{L}}\right) \ .
\end{eqnarray*}

\item Termination Case $3$: The sequence consists of $3L$ type A packets, all deadlines are
at most $3L$.
\begin{itemize}
\item {\bf We bound the performance of ALG:}
Every block satisfies Conditions $(I_1)$ and $(I_2)$. Therefore, type A packet is
transmitted during the start phase of each block. Then packets from
all the type B colors are transmitted during the regular phases and then a type A packet
is transmitted during the end phase. It is clear that during each block $\alg$ spent
at least  TSP(G)  time-slots for color transitions. Since there are
$\frac{1}{5}\sqrt{ {L}/({{\rm TSP(G) }\log{C}})}$ blocks, at least
$\frac{1}{5}\sqrt{ {{\rm TSP(G)}L}/{\log{C}}}$ time-slots were
used for color transition. It follows that
 $\alg(\sigma) \leq 3L - \frac{1}{5}\sqrt{ {\rm TSP(G)L}/{\log{C}}}$.
\item {\bf We bound the performance of $OPT ^\prime$:}
At each time unit until the final event, $\opt ^\prime$ transmits the type A packet
that arrived at that particular time unit. From the final event
and until time unit $3L$, $\opt ^\prime$ transmits the type A packets that arrived at
the final event. Therefore, $\opt ^\prime$ transmits $3L$ type A packets, and $\opt \geq \opt ^\prime = 3L$.
\end{itemize}
The competitive ratio is
$$
\frac{\alg(\sigma)}{\opt(\sigma)} \leq \frac{3L - \frac{1}{5}\sqrt{\frac{{\rm TSP(G)}L}{\log{C}}}}{3L} =
1 - \Omega\left(\sqrt{\frac{{\rm TSP(G)}}{L \log{C}}}\right).
$$
\end{enumerate}
This complete the proof of Theorem \ref{thm:LowerBoundGeneralGraph}.

\subsection{Analysis of the algorithm TSP-EDF} \label{appsec:OnlineAlgoAnalysis}
First we need to demonstrate that the output schedule $\rho$ is feasible.
Specifically, we need  to prove that every scheduled packet $i$
is transmitted during the time frame $[r_{i}, d_{i}]$, and that
there is a color transition of length $w(i,j)$ between the transmission of
any two successive packets with different colors $i$ and $j$.

\begin{lemma} \label{lemma:Correctness}
The algorithm ${\rm TSP-EDF}$ generates a valid schedule.
\end{lemma}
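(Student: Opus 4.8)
The final statement is Lemma \ref{lemma:Correctness}: the schedule produced by TSP-EDF is feasible. I need to verify the two feasibility requirements separately — that each transmitted packet respects its $[r_i, d_i]$ window, and that a color transition of the correct length $w(i,j)$ separates any two consecutive packets of different colors.

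The plan is to work phase by phase, since the algorithm constructs $\rho_\ell$ independently in each phase of length $K = \sqrt{{\rm TSP(G)}L}$. First I would establish the deadline constraint. Within a phase, the construction reduces each deadline from $d$ to $K\floor{d/K}$, then takes the $K$-length prefix $S^{\ell,K}$ of the EDF schedule on the reduced deadlines, and any packet whose position in $\rho_\ell$ exceeds $K$ is dropped. The key observation is that a packet included in $S^{\ell,K}$ has reduced deadline at least $K\ell$ (it survived the deadline reduction in phase $\ell$, meaning $K\floor{d/K} \geq K\ell$, i.e. its reduced deadline is a multiple of $K$ that is at least the end of the current phase), and it is actually transmitted within the phase occupying time-slots in $[(\ell-1)K + 1, \ell K]$. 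So I must check that the real deadline $d \geq \ell K$ for any transmitted packet. Since the reduced deadline $K\floor{d/K}$ was not exceeded and equals at least $\ell K$, and $d \geq K\floor{d/K}$, the real deadline is respected. The arrival constraint holds because packets are only collected once they have arrived (they appear in $S^\ell$, the pool of untransmitted already-arrived packets).

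Next I would verify the transition-cost constraint, which is the more syntactically delicate part but conceptually routine: the construction explicitly inserts $w(v_i, v_j)$ idle transition time-slots between consecutive color-groups $S_i^{\ell,K}$ and $S_j^{\ell,K}$, and the colors are laid out in a fixed TSP order $i_1, \dots, i_C$, so by construction every color change inside a phase is padded correctly. The only subtle point is at phase boundaries: the last packet of phase $\ell$ and the first packet of phase $\ell+1$ might differ in color without an explicit transition between them. I would argue this is handled because each $\rho_\ell$ is padded to length exactly $K$ — and I would need to confirm that either the inter-phase transition is also accounted for (e.g. by the ordering restarting from a fixed color each phase, or by the padding absorbing the cost), which is where I must be careful. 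The cleanest route is to note that the last color group of phase $\ell$ ends with color $i_C$ and the first of phase $\ell+1$ begins with $i_1$, and to include the boundary transition either in the trailing padding of $\rho_\ell$ or the leading padding of $\rho_{\ell+1}$.

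The main obstacle I anticipate is precisely this phase-boundary transition accounting: showing that the length-exactly-$K$ normalization leaves enough room (idle slack) to absorb any color transition straddling two phases, so that the concatenation $\rho_1 \rho_2 \cdots$ is globally feasible and not merely feasible within each phase in isolation. I expect the argument to rely on the fact that total transition cost per phase is bounded by ${\rm TSP(G)} \leq K^2/L < K$ (since $\gamma < 1$), so idle padding is available, and a transition crossing the boundary can be charged against that slack. Everything else reduces to unwinding the definitions of the reduced deadline and the EDF prefix, which requires no nontrivial computation.
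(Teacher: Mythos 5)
Your verification of the deadline and release constraints is correct and is the heart of the matter: since reduced deadlines are multiples of $K$, any packet admitted to $S^{\ell}$ at the start of phase $\ell$ has reduced deadline at least $\ell K$, so it may be placed anywhere in the phase (in particular, reordered by color) without violating $d \geq K\floor{d/K} \geq \ell K$. This matches the intended argument; note the paper itself gives no details, merely stating that the proof is similar to Lemma 3.1 of \cite{Azar_BMC}.

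However, your treatment of the phase boundary contains a genuine gap, and you half-sense it yourself. First, the claim that phase $\ell$ ends with color group $i_C$ and phase $\ell+1$ begins with $i_1$ is false: some groups $S_{i_j}^{\ell,K}$ may be empty, and, worse, trimming removes the \emph{last} packets of $\rho_\ell$, i.e., precisely the later color groups, so the last transmitted color can be an arbitrary $i_a$. Second, the proposed charging to idle padding fails in exactly the problematic case: when $|\rho_\ell| > K$ the phase is packed flush, ending with a packet in slot $\ell K$ and no trailing idle slots, while the padding of $\rho_{\ell+1}$ is affixed at its \emph{end}, not its beginning --- so there is no slack at the boundary to absorb a transition. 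Your bound ${\rm TSP(G)} < K$ controls the total per-phase transition budget, not the location of idle slots, and locations are what feasibility needs. The correct repair is to reserve the inbound transition explicitly: prepend $w(c_{\rm last}, c_{\rm first})$ transition slots at the head of $\rho_{\ell+1}$ (or drop further packets from the tail of $\rho_\ell$ to make room) \emph{before} normalizing the length to $K$. By the triangle inequality this boundary transition is a shortcut of the tour, so the per-phase transition cost rises by at most an additive ${\rm TSP(G)}$, which only perturbs the constant in Lemma \ref{lemma:TRANSPenalty} and Theorem \ref{thm:Approximation}; this is how the uniform-cost analog (at most $C$ rather than $C-1$ unit transitions per phase) is handled in \cite{Azar_BMC}. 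Without this modification, the schedule as you describe it is feasible within each phase but not as a concatenation, which is exactly what the lemma asserts.
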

\begin{proof}
The results follows from description of the algorithm. The exact proof is
similar to the proof of Lemma 3.1 in \cite{Azar_BMC}.
\end{proof}

Now we analyze the performance guarantee of the algorithm.
We first define two input sequences $\sigma'$ and $\widetilde{\sigma}$,
which are modifications of $\sigma$. The input sequence $\sigma'$
consists of all packets in $\sigma$, but modifies the color of
packets to a fixed color $c'$. Specifically,
each packet $(r,d,c) \in \sigma$ defines a packet $(r, d, c') \in
\sigma'$. The input sequence $\widetilde{\sigma}$ consists of all
packets in $\sigma$ such that a packet $(r,d,c) \in \sigma$ gives
rise to a packet $(K \ceil{r/K}, K \floor{d/K}, c')
\in \widetilde{\sigma}$, where $c'$ is a fixed color.
Hence, all packets in $\widetilde{\sigma}$ have the
same color, and the release and deadline times of
each packet in $\widetilde{\sigma}$ are aligned with the start/end time of
the corresponding phase so that the span of each packet is
fully contained in the span of that packet according to $\sigma$.
Here that the \emph{span} of a packet $(r,d,c)$ is defined as the time frame $[r,d]$.

\begin{lemma} \label{lemma:SigmaTildaEq}
$\opt(\widetilde{\sigma}) = \alg(\widetilde{\sigma})$.
\end{lemma}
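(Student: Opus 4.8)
The plan is to show that on the instance $\widetilde{\sigma}$ the algorithm TSP-EDF degenerates into plain earliest-deadline-first scheduling of unit jobs, and then to invoke the classical optimality of EDF for such jobs (stated in the introduction). First I would record two simplifications that $\widetilde{\sigma}$ induces. Since every packet of $\widetilde{\sigma}$ carries the single color $c'$, the schedule $\rho_\ell$ produced in each phase contains no color-transition slots, so every one of the $K$ slots of a phase is free to carry a packet; thus the only constraint is a capacity of $K$ transmissions per phase. Moreover, the deadline-reduction step is vacuous on $\widetilde{\sigma}$: each deadline already has the form $K\floor{d/K}$, i.e. is a multiple of $K$, so replacing $d$ by $K\floor{d/K}$ changes nothing.

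Next I would argue that, because of the phase alignment built into $\widetilde{\sigma}$, the algorithm's per-phase rule coincides with genuine time-slot EDF. Each release time is $K\ceil{r/K}$ and each deadline is $K\floor{d/K}$, both multiples of $K$; hence a packet becomes available exactly at a phase boundary and expires exactly at a phase boundary. Consequently the set of released, unexpired packets is constant throughout any single phase, changing only between phases. This means the ``$K$-length prefix of the EDF schedule of $S^\ell$'' that TSP-EDF transmits in phase $\ell$ is precisely the set of packets that time-slot EDF would transmit during the $K$ slots of phase $\ell$; and carrying every untransmitted, unexpired packet forward into $S^{\ell+1}$ reproduces EDF's behavior across phase boundaries. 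Therefore $\alg$ restricted to $\widetilde{\sigma}$ is exactly EDF run on $\widetilde{\sigma}$.

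Finally, since $\widetilde{\sigma}$ is a single-color instance it is an instance of the basic unit-size packet scheduling problem (equivalently, a single machine that processes $K$ unit jobs per phase-slot), for which EDF is known to attain optimal throughput. Combining this with the previous paragraph gives $\alg(\widetilde{\sigma}) = \mathrm{EDF}(\widetilde{\sigma}) = \opt(\widetilde{\sigma})$, as claimed.

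I expect the main obstacle to be the middle step: verifying rigorously that the phase-granular ``prefix-of-EDF'' rule agrees with time-slot EDF. This equivalence is exactly where the alignment of $\widetilde{\sigma}$ (releases rounded up, deadlines rounded down, both to multiples of $K$) is essential --- without it a packet could become available or expire in the interior of a phase and the prefix rule could diverge from EDF. If one prefers a self-contained argument over quoting EDF optimality, I would replace the last paragraph by a standard exchange argument: starting from any optimal schedule for $\widetilde{\sigma}$, repeatedly swap transmissions so that within each phase the earliest-deadline available packets are the ones transmitted, never decreasing the throughput, until the schedule coincides with the EDF schedule produced by $\alg$.
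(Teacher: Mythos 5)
Your proposal is correct and follows essentially the same route as the paper: the paper likewise observes that TSP-EDF differs from plain EDF in exactly three ways (deadline rounding, release rounding, transition slots), notes that the alignment and single color of $\widetilde{\sigma}$ nullify all three so that $\alg$'s schedule coincides with EDF's, and then invokes EDF's optimality for single-color sequences. Your treatment is merely more detailed --- in particular you explicitly verify the phase-granular ``prefix-of-EDF'' equivalence that the paper leaves implicit --- which strengthens rather than changes the argument.
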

\begin{proof}
Note that algorithm TSP-EDF has three modification with respect to EDF:
\begin{itemize}
\item Packet  deadline times are modified to $K \floor{d/K}$.

\item Packet  release times are modified to $K \ceil{r/K}$
(because in each phase only packets released during previous phases
are transmitted).

\item Color transition time-slots are added between the transmission
of packets from different colors.

\end{itemize}

The release and deadline times of the packets in $\widetilde{\sigma}$
are aligned and all the packets have the same color. Hence, $\alg$'s
schedule is identical to EDF's schedule. Since EDF is optimal
for sequences that consist of packets with one color,
$\opt(\widetilde{\sigma}) = \alg(\widetilde{\sigma})$.
\end{proof}

\begin{lemma} \label{lemma:OPTPertubation}
$\opt(\widetilde{\sigma}) \geq \left(1 - 2\sqrt{{\rm TSP(G) }/ L}\right)\opt(\sigma')$.
\end{lemma}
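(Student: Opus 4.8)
The plan is to exploit that both $\sigma'$ and $\widetilde\sigma$ are single-color sequences, so no color-transition slots are ever needed and the scheduling problem on each reduces to classical unit-job interval scheduling, for which EDF is optimal and $\opt$ equals the maximum number of jobs that can be placed one-per-slot inside their windows. The two instances carry the \emph{same} set of jobs; they differ only in that $\widetilde\sigma$ replaces each window $[r,d]$ by its phase-aligned sub-interval $[K\ceil{r/K},\,K\floor{d/K}]$. Since $K\ceil{r/K} \le r + K$ and $K\floor{d/K} \ge d - K$, this shrinks each window by strictly less than $K$ at each end, so it suffices to lower-bound the number of jobs schedulable after shrinking every window by exactly $K$ per side, a statement that is only stronger because the true phase-aligned windows are larger.

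First I would fix an optimal schedule $\pi$ for $\sigma'$, i.e.\ an injection $p \mapsto t_p$ with $r_p \le t_p \le d_p$ realizing $\opt(\sigma')$. Call a job \emph{good} if its assigned slot already lies inside the phase-aligned window, i.e.\ $K\ceil{r_p/K} \le t_p \le K\floor{d_p/K}$; good jobs keep their slot and remain feasible and pairwise distinct in $\widetilde\sigma$, so $\opt(\widetilde\sigma)$ is at least the number of good jobs. A job is \emph{bad} only when $t_p$ falls in the length-$<K$ tail of its release phase or head of its deadline phase. The quantitative lever is that the minimum laxity guarantees each window has length at least $L$ and therefore contains at least $L/K - 2$ \emph{full} phases; hence the two boundary phases responsible for badness constitute only a $2K/L = 2\sqrt{{\rm TSP(G)}/L}$ fraction of each job's admissible phases.

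The main obstacle is converting this per-job fraction into a global bound while respecting that each phase holds only $K$ jobs: naively discarding every bad job is too lossy, so bad jobs must instead be \emph{reassigned} into the spare capacity of their many admissible phases. I would handle this with a deficiency (Hall / max-flow) argument on the bipartite graph between jobs and phases, each phase $\phi$ having capacity $K$: because every job reaches a contiguous block of at least $L/K - 2$ phases and $\pi$ already certifies a feasible one-per-slot placement for $\sigma'$, no set of phases can be so over-subscribed as to force dropping more than a $2K/L$ fraction of the jobs; equivalently, one compares the min-cut of the phase-aligned instance with the feasibility witnessed by $\pi$ and absorbs an additive $O(K)$ slack, which is an $O(\sqrt{{\rm TSP(G)}/L})$ fraction of $\opt(\sigma')$ in the regime $K = \sqrt{{\rm TSP(G)}L} \ll L$. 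Combining the good-job schedule with the reassigned bad jobs yields $\opt(\widetilde\sigma) \ge \left(1 - 2\sqrt{{\rm TSP(G)}/L}\right)\opt(\sigma')$, as claimed.
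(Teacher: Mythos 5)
Your plan is a genuinely different route from the paper's: the paper proves this lemma in one line, observing that $\widetilde{\sigma}$ is a $K$-perturbation of $\sigma'$ with all colors identical and invoking Theorem~2.2 of \cite{Azar_BMC} as a black box, whereas you attempt to reprove that perturbation theorem from scratch. That is legitimate in principle, and your framing (single color $\Rightarrow$ pure unit-job interval scheduling, windows shrunk by at most $K$ per side, Hall/deficiency on the jobs-versus-phases bipartite graph) is the right architecture. But your sketch stops exactly where the work begins: the claim that ``no set of phases can be so over-subscribed as to force dropping more than a $2K/L$ fraction of the jobs'' \emph{is} the content of the lemma, and nothing in the proposal establishes it. In particular, the per-job observation that the two boundary phases form only a $2K/L$ fraction of each job's at least $L/K-2$ admissible phases does not transfer to a global multiplicative bound, because the lost slots can concentrate: take windows of length $L+1$ chained with overlap $1$; after alignment, roughly $2K$ slots disappear at \emph{every} junction between consecutive windows (the right margin of one meets the left margin of the next), so the total loss scales with the number of windows in the deficient set, not with any single job's boundary phases. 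Moreover, a fixed optimal schedule $\pi$ for $\sigma'$ can have \emph{all} of its jobs bad, so the good/bad split followed by ``reassignment via max-flow'' is really just recomputing $\opt(\widetilde{\sigma})$ from scratch --- the deficiency bound must be proven, not presupposed.

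The argument can be repaired along your lines, but the repair is nontrivial and does not obviously give your constant. Let $S$ maximize the deficiency $|S|-|N'(S)|$ for the aligned windows (where $N(S)$, $N'(S)$ denote the unions of original and aligned windows of $S$); then $\opt(\sigma')-\opt(\widetilde{\sigma}) \leq |N(S)|-|N'(S)|$. A slot missing from $N'(S)$ lies in the $\leq K$-margin of every window covering it, so passing to a minimal subcover of each maximal interval of $N(S)$ (in which every slot is covered at most twice, and each subcover interval has length at least $L+1$) gives $|N(S)\setminus N'(S)| \leq (4K/L)\,|N(S)|$; and when the deficiency is positive one also has $\opt(\sigma') \geq \opt(\widetilde{\sigma}) \geq |N'(S)| \geq \left(1-4K/L\right)|N(S)|$, which closes the loop with ratio $1-O\left(\sqrt{{\rm TSP(G)}/L}\right)$. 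Note this natural completion yields constant roughly $4$, not the $2$ asserted in the lemma --- a constant that is actually used downstream to obtain $1-3\sqrt{{\rm TSP(G)}/L}$ in Theorem~\ref{thm:Approximation} --- so to get $2$ you would need a sharper charging of margins (or simply cite the perturbation theorem, as the paper does).
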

\begin{proof}
The notion of \emph{$\lambda$-perturbation}, defined in \cite{Azar_BMC}, is as follows:
An input sequence $\widehat{\delta}$ is a
\emph{$\lambda$-perturbation} of $\delta$ if $\widehat{\delta}$
consists of all packets of $\delta$, and each packet
$(\widehat{r},\widehat{d}) \in \widehat{\delta}$ corresponding to packet
$(r,d) \in \delta$ satisfies $\widehat{r} - r \leq \lambda$ and $d -
\widehat{d} \leq \lambda$.
\\ By definition, $\widetilde{\sigma}$ is $K$-perturbation of $\sigma'$,
and the colors of all packets are identical. Hence,   Theorem
2.2 from \cite{Azar_BMC} yields the following inequality
\end{proof}

\begin{lemma} \label{lemma:TRANSPenalty}
$\alg(\sigma) \geq \big(1 - \sqrt{{\rm TSP(G)} / L}\,\big)\alg(\sigma')$.
\end{lemma}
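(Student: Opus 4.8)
The lemma asserts that the true algorithm's throughput $\alg(\sigma)$ on the multicolored sequence $\sigma$ is at least $\big(1 - \sqrt{{\rm TSP(G)}/L}\big)$ times the throughput $\alg(\sigma')$ of the same algorithm on the recolored sequence $\sigma'$ in which every packet is given the single fixed color $c'$. Intuitively, $\sigma'$ is the "easy" sequence where no color transitions are ever needed, so TSP-EDF running on $\sigma'$ loses nothing to transition overhead; the only difference between the runs on $\sigma$ and on $\sigma'$ is the transition time-slots that TSP-EDF inserts in each phase. The plan is to account, phase by phase, for how many packets are displaced by those inserted transition slots and to show the total loss is at most a $\sqrt{{\rm TSP(G)}/L}$ fraction.

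**The approach.** I would argue on a per-phase basis, since TSP-EDF and the recoloring both respect the same phase structure of length $K = \sqrt{{\rm TSP(G)}L}$. In phase $\ell$, the algorithm on $\sigma'$ assembles its schedule $\rho_\ell'$ from the same collected set $S^{\ell,K}$ (the deadlines and release alignments are identical — recoloring does not touch $r$, $d$, or the reduced deadline $K\floor{d/K}$), and then fills exactly $K$ slots with packet transmissions, inserting \emph{no} transition time-slots because all colors agree. The algorithm on $\sigma$ builds $\rho_\ell$ from the very same packet groups but must insert $w(v_i,v_j)$ transition slots between consecutive color groups, and the total transition overhead per phase is bounded by ${\rm TSP(G)}$ (the order of colors is the minimal TSP tour). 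Hence within each phase the schedule on $\sigma$ transmits at least $K - {\rm TSP(G)}$ packets for every $K$ packets the schedule on $\sigma'$ would transmit, because the trailing packets are exactly those that overflow past length $K$ once the transition slots are accounted for.

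**Key steps in order.** First I would fix a phase $\ell$ and observe that the candidate packet collection and the EDF-prefix selection are identical in both runs, so the only discrepancy is the inserted transition slots. Second, I would bound the total transition time inserted in a single phase by ${\rm TSP(G)}$, using that the color groups are ordered along the minimal TSP tour and that returning through intermediate colors only costs what the tour already charges. Third, I would conclude that the number of packets transmitted by $\alg$ on $\sigma$ in phase $\ell$ is at least the number transmitted on $\sigma'$ in that phase minus ${\rm TSP(G)}$; summing over all phases gives
\begin{equation*}
\alg(\sigma) \;\geq\; \alg(\sigma') - \frac{\text{(number of phases)} \cdot {\rm TSP(G)}}{1}.
\end{equation*}
Fourth, since each phase has length $K$ and $\alg(\sigma') \leq$ (number of phases)$\cdot K$, the number of phases is at most $\alg(\sigma')/ $ (average throughput per phase); more directly, the total transition loss is at most ${\rm TSP(G)}/K$ per unit of throughput, and ${\rm TSP(G)}/K = {\rm TSP(G)}/\sqrt{{\rm TSP(G)}L} = \sqrt{{\rm TSP(G)}/L}$, yielding $\alg(\sigma) \geq \big(1 - \sqrt{{\rm TSP(G)}/L}\big)\alg(\sigma')$.

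**Main obstacle.** The delicate point is the normalization in the last step: relating the additive per-phase loss of ${\rm TSP(G)}$ to a multiplicative factor on $\alg(\sigma')$ requires a clean lower bound on how many packets $\sigma'$ transmits \emph{per phase that is actually active}. The cleanest way is to charge the ${\rm TSP(G)}$ transition slots against the $K$ slots of the \emph{same} phase, so that the loss is a fraction ${\rm TSP(G)}/K = \sqrt{{\rm TSP(G)}/L}$ of that phase's capacity, and then invoke that $\alg(\sigma')$ never exceeds the total capacity $\sum_\ell K$. One must be careful about phases where $\rho_\ell'$ is not full (fewer than $K$ packets available): in such phases there is slack, the inserted transition slots displace no packets at all, and the bound holds trivially — so the worst case is genuinely the full phases, where the $\sqrt{{\rm TSP(G)}/L}$ fraction is tight. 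I would handle this by splitting phases into full and non-full and noting the loss is vacuous in the latter. This mirrors the corresponding argument in~\cite{Azar_BMC}, with ${\rm TSP(G)}$ playing the role their uniform transition cost $C$ played.
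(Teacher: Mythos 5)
Your proposal is correct and takes essentially the same route as the paper: both arguments observe that the two runs differ only in the transition slots inserted at the end of each phase, charge at most ${\rm TSP(G)}$ lost packets per active phase (with idle slots in non-full phases absorbing the transitions for free, which is the paper's ``worst case is when there are no idle time-slots'' remark), and normalize by the phase length $K=\sqrt{{\rm TSP(G)}L}$ so the loss is a ${\rm TSP(G)}/K=\sqrt{{\rm TSP(G)}/L}$ fraction of $\alg(\sigma')$. The paper merely packages your per-phase accounting in closed form as $\alg(\sigma)\geq\alg(\sigma')-\left(\left\lceil \alg(\sigma')/K\right\rceil-1\right){\rm TSP(G)}$.
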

\begin{proof}
The difference between the schedule TSP-EDF generated for
$\sigma$ and the schedule it generates for $\sigma'$ is that
packets might be dropped at the end of each phase in $\sigma$
due to color transition. The worst case for $\sigma$ is when
there are no idle time-slots in any of the phases of $\sigma'$.
Otherwise, the idle time-slots might be used for color transition.
Therefore, there are at most
$\lceil \alg(\sigma') / K \rceil - 1$ phases in which algorithm
TSP-EDF transmits packets (the $-1$ term is due to the fact
that the algorithm does not transmit any packet during the first
phase). Since there are no more than  TSP(G)  color transitions
in each phase, we obtain the following inequality:
\begin{eqnarray*}
\alg(\sigma) & \geq & \alg(\sigma') - \left(\lceil \alg(\sigma') / K \rceil - 1\right){\rm TSP(G) }\\
& = & \alg(\sigma') - \left(\left \lceil \frac{\alg(\sigma')}{\sqrt{{\rm TSP(G)}L}} \right \rceil - 1\right){\rm TSP(G)}\\
& \geq & \left(1 - \sqrt{{\rm TSP(G) }/ L}\right)\alg(\sigma') \ .
\end{eqnarray*}
\end{proof}

We are now ready to prove the main theorem of this section.

\begin{theorem} \label{thm:Approximation}
The algorithm {\rm TSP-EDF} attains a competitive ratio of $1 - 3\sqrt{{\rm TSP(G)} / L}$.
\end{theorem}
\begin{proof}
Using the previously stated results, we obtain that
\begin{eqnarray*}
\alg(\sigma) & \geq & \left( 1 - \sqrt{{\rm TSP(G) }/ L} \right) \alg(\sigma')
= \left( 1 - \sqrt{{\rm TSP(G) } / L} \right) \alg(\widetilde{\sigma}) \\
& = & \left( 1 - \sqrt{{\rm TSP(G) } / L} \right) \opt(\widetilde{\sigma}) \\
& \geq & \left( 1 - \sqrt{{\rm TSP(G) } / L} \right) \left( 1 - 2\sqrt{{\rm TSP(G) } / L} \right) \opt(\sigma') \\
& \geq & \left( 1 - 3\sqrt{{\rm TSP(G) } / L} \right) \opt(\sigma) \ .
\end{eqnarray*}
The first inequality results from Lemma~\ref{lemma:TRANSPenalty}. The
first equality follows by the definition of the algorithm. The second
equality holds by lemma~\ref{lemma:SigmaTildaEq}. The second
inequality results from Lemma~\ref{lemma:OPTPertubation}.
Finally, the last inequality holds because $\sigma'$ is similar $\sigma$,
but all packets have the same color. This implies that any schedule
feasible for $\sigma$ is also feasible for $\sigma'$, and thus
$\opt(\sigma') \geq \opt(\sigma)$.~
\end{proof}

\section{Figures} \label{appsec:Figures}
\begin{figure}[htbp]
\begin{center}
\scalebox{0.40}
{
\includegraphics{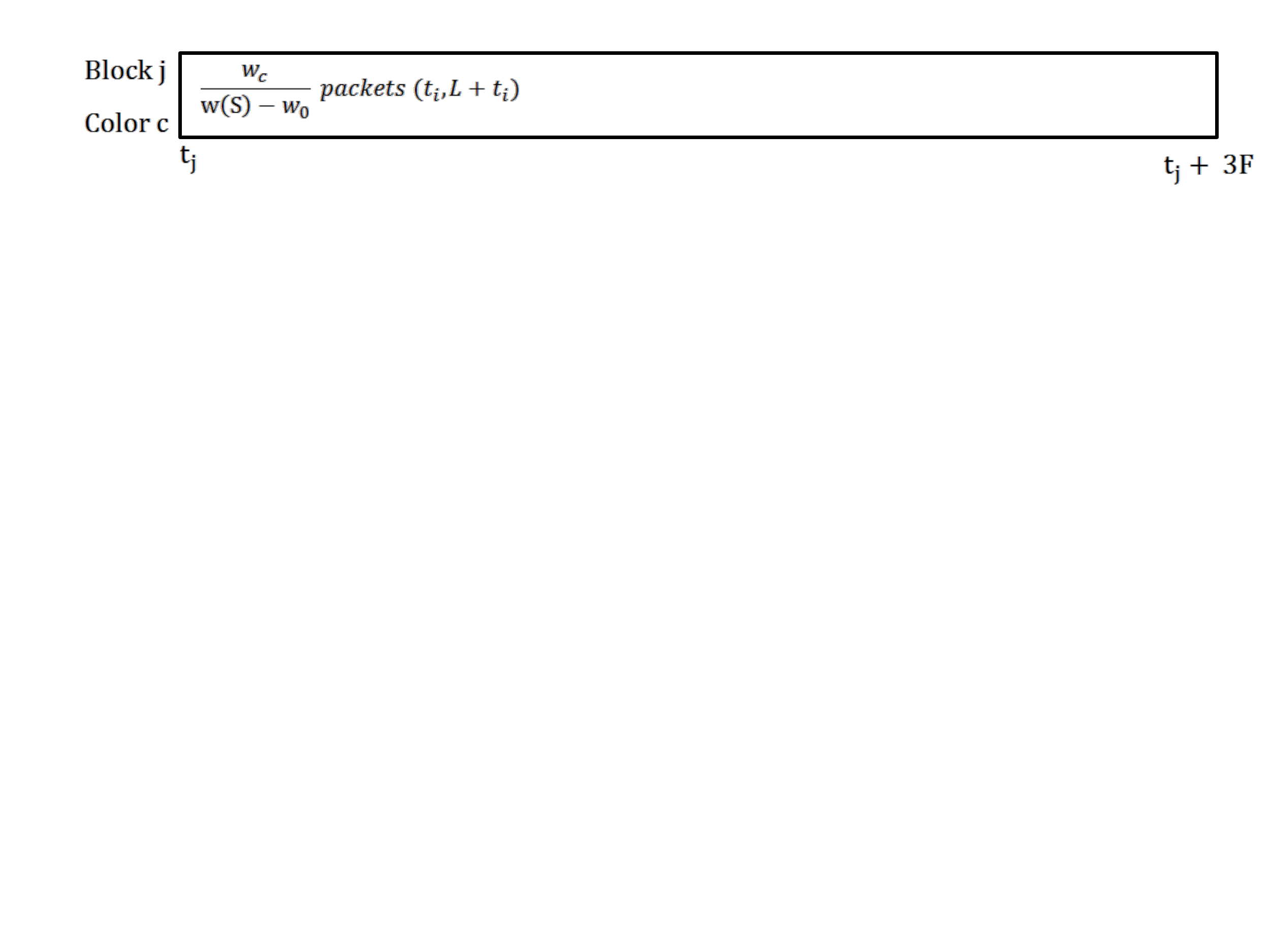}
}
\end{center}
\vspace{-190pt}
\caption{Block's structure.
The pair $(r,d)$ represent release time $r$ and deadline $d$.
Note that all the packets arrive at once in the beginning of the block.}
\label{figure:Block}
\end{figure}

\begin{figure}[htbp]
\begin{center}
\scalebox{0.40}
{
\includegraphics{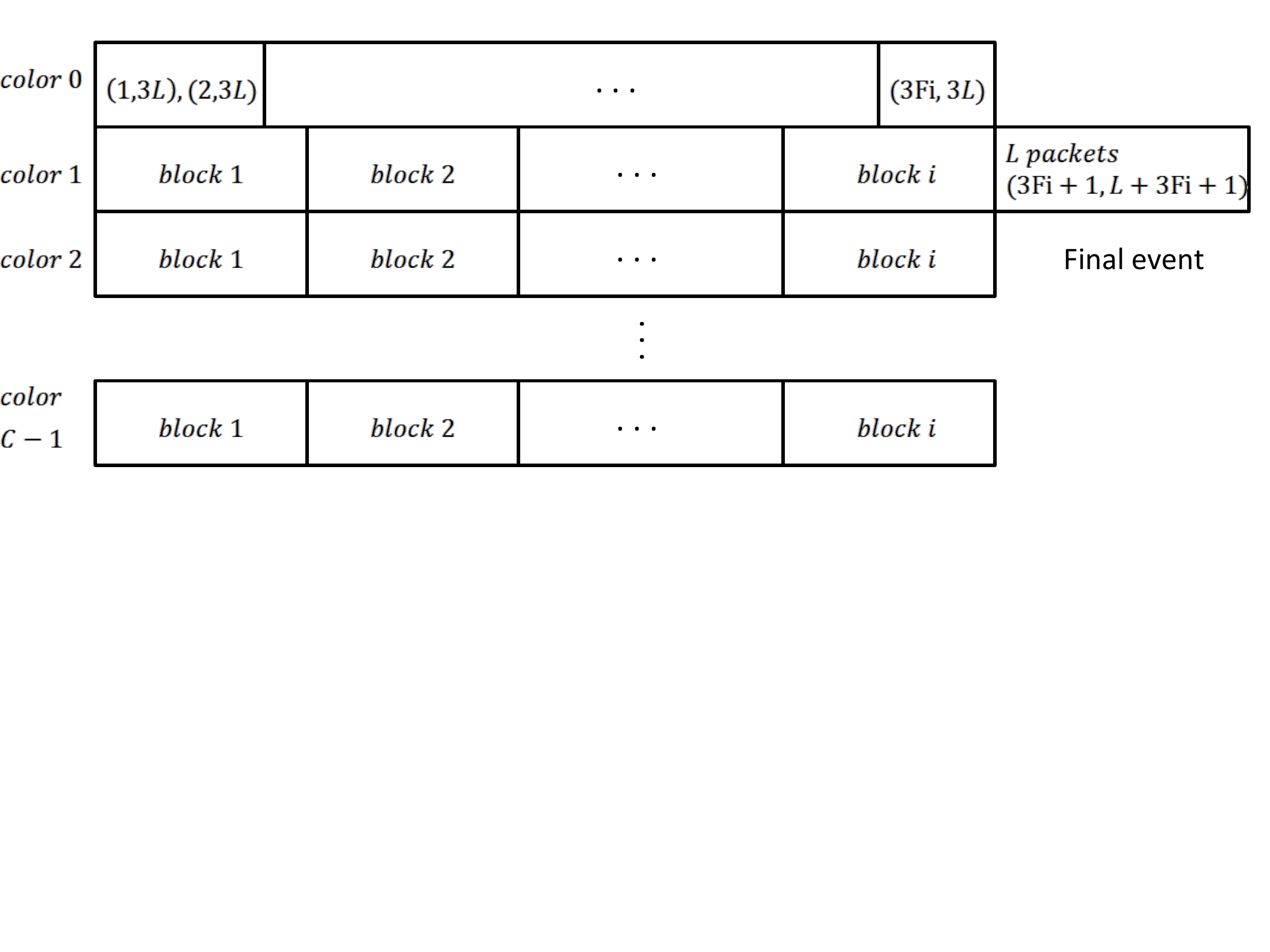}
}
\end{center}
\vspace{-110pt}
\caption{Sequence structure for Termination Case 1. See Figure \ref{figure:Block}
for blocks structure.}
\label{figure:Case1}
\end{figure}

\begin{figure}[htbp]
\begin{center}
\scalebox{0.40}
{
\includegraphics{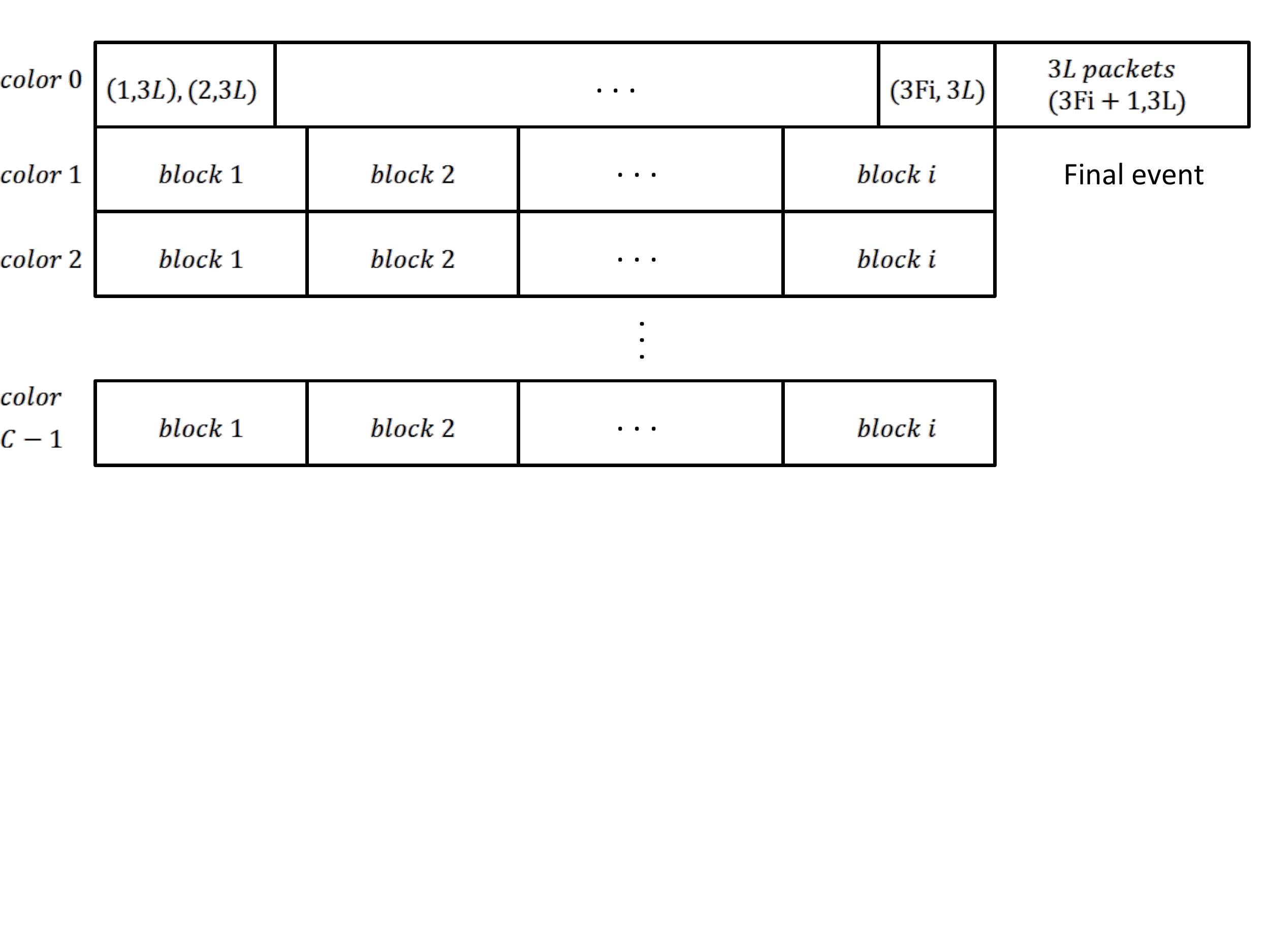}
}
\end{center}
 \vspace{-120pt}
\caption{Sequence structure for Termination Case 2. See Figure \ref{figure:Block}
for blocks structure.}
\label{figure:Case2}
\end{figure}

\begin{figure}[htbp]
\begin{center}
\scalebox{0.40}
{
\includegraphics{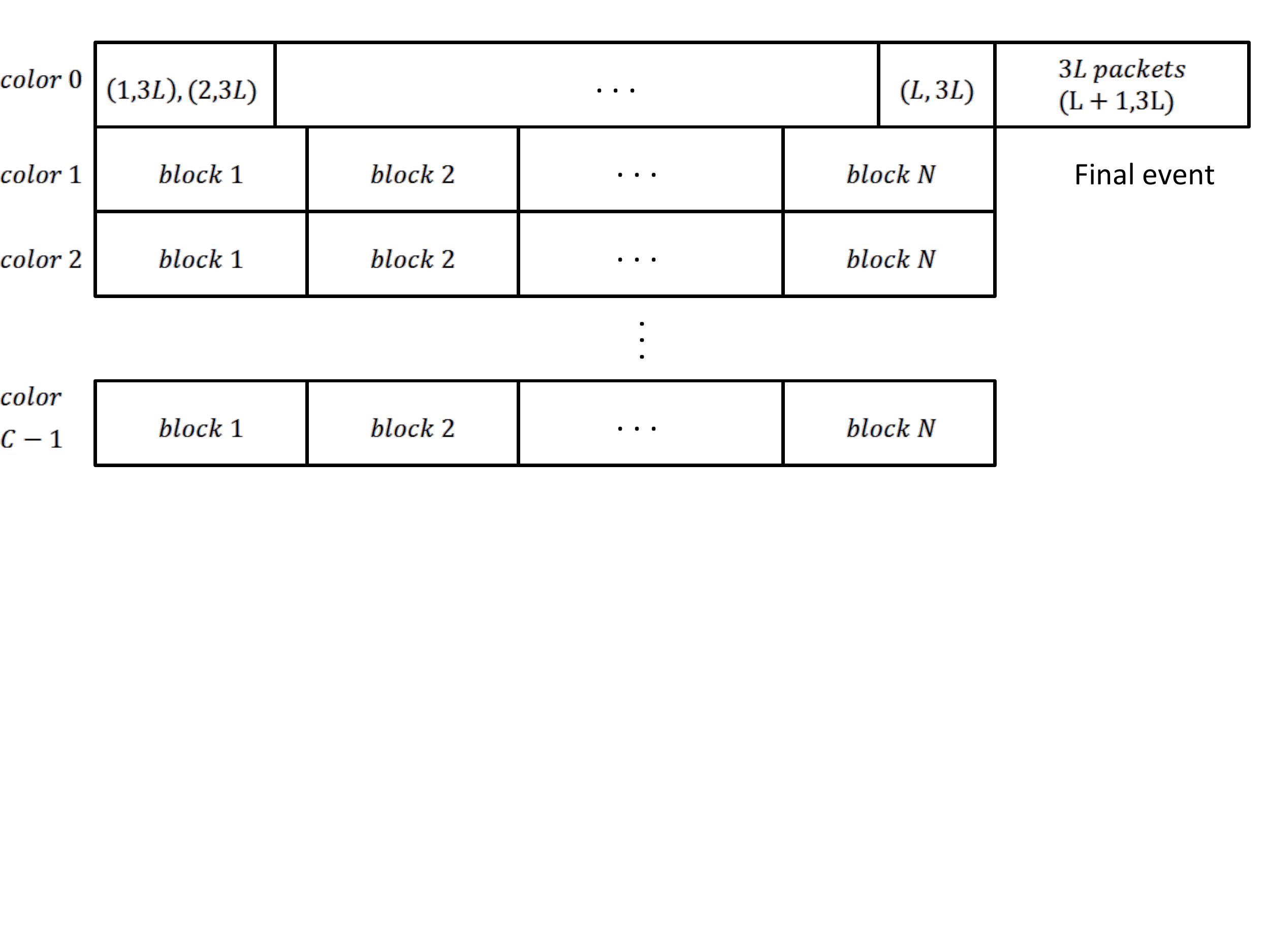}
}
\end{center}
\vspace{-120pt}
\caption{Sequence structure for Termination Case 3. Recall that $N = \frac{1}{3}\sqrt{\frac{L}{w({\rm S})}}$.
See Figure \ref{figure:Block} for blocks structure.}
\label{figure:Case3}
\end{figure}

\end{document}